\definecolor{webgreen}{rgb}{0,.5,0}
\definecolor{webbrown}{rgb}{.6,0,0}
\newcommand{\seqnum}[1]{\href{http://oeis.org/#1}{\underline{#1}}}
\theoremstyle{plain}
\newtheorem{theorem}{Theorem}
\newtheorem{corollary}[theorem]{Corollary}
\newtheorem{lemma}[theorem]{Lemma}
\newtheorem{prop}[theorem]{Proposition}
\theoremstyle{definition}
\newtheorem{example}[theorem]{Example}
\newtheorem{examples}[theorem]{Examples}
\theoremstyle{remark}
\begin{document}


\begin{center}
\vskip 1cm{\LARGE\bf Using Graph Theory\\ 
\vskip .04in
to Derive Inequalities for the Bell Numbers}
\vskip 1cm
\normalsize
Alain Hertz\\
Department of Mathematics and Industrial Engineering\\
Polytechnique Montr\'eal and GERAD\\
Montr\'eal, Qu\'ebec H3T 1J4 \\
Canada\\
\href{mailto:alain.hertz@gerad.ca}{\tt alain.hertz@gerad.ca}\\
\ \\
Anaelle Hertz\\
Department of Physics\\
University of Toronto\\
Toronto, Ontario M5S 1A7\\
Canada\\
\href{mailto:ahertz@physics.utoronto.ca}{\tt ahertz@physics.utoronto.ca}\\
\ \\
Hadrien M\'elot \\
Computer Science Department - Algorithms Lab\\
University of Mons\\
 7000 Mons \\
Belgium\\
\href{mailto:hadrien.melot@umons.ac.be}{\tt hadrien.melot@umons.ac.be}\\

\end{center}

\vskip .2 in

\newcommand {\stirlings}[2]{\genfrac\{\}{0pt}{}{#1}{#2}}
\newcommand{\B}{\mathcal{B}}
\newcommand{\T}{\mathcal{T}}
\newcommand{\A}{\mathcal{A}}

\begin{abstract}
The Bell numbers count the number of different ways to partition a set of $n$ elements while the graphical Bell numbers count the number of non-equivalent partitions of the vertex set of a graph into stable sets. This relation between graph theory and integer sequences has motivated us to study properties on the average number of colors in the non-equivalent colorings of a graph to discover new nontrivial inequalities for the Bell numbers. Examples are given to illustrate our approach.
\end{abstract}

\section{Introduction}

The Bell numbers $(B_n)_{n\geq 0}$ count the number of different ways to partition a set that has exactly $n$ elements. Starting with $B_0 = B_1 = 1$, the first few Bell numbers are 1, 1, 2, 5, 15, 52, 203 (sequence \seqnum{A141390}). The integer $B_n$ can be defined as the sum 
$$B_n=\sum_{k=0}^{n}\stirlings{n}{k}$$
where $\stirlings{n}{k}$ is the Stirling number of the second kind, with parameters $n$ and $k$ (i.e., the number of partitions of a set of $n$ elements into $k$ blocks). Dobi\'nski's formula \cite{Dob} gives
$$B_n=\frac{1}{e}\sum_{k=0}^{\infty}\frac{k^{n}}{k!}.$$

The 2-Bell numbers $(T_n)_{n\geq 0}$ count the total number of blocks in all partitions of a set of $n+1$ elements. Starting with $T_0 = 1$ and $T_1=3$, the first few 2-Bell numbers are 1, 3, 10, 37, 151, 674 (sequence \seqnum{A005493}). More formally, the integer $T_n$ is defined as
$$T_{n}=\sum_{k=0}^{n+1}k\stirlings{n+1}{k}=B_{n+2}-B_{n+1}.$$

Odlyzko and Richmond \cite{OR} have studied the average number $A_n$ of blocks in a partition of a set of $n$ elements, which can be defined as 
$$A_n=\frac{T_{n-1}}{B_n}.$$

A concept very close to the Bell numbers is also defined in graph theory. More precisely, a coloring of a graph $G$ is an assignment of colors to its vertices such that adjacent vertices have different colors. The chromatic number $\chi(G)$ is the minimum number of colors in a coloring of $G$. Two colorings are equivalent if they induce the same partition of the vertex set into color classes. For an integer $k>0$, we define $S(G,k)$ as the number of proper non-equivalent colorings of a graph $G$ that use exactly $k$ colors. Since $S(G,k)=0$ for $k< \chi(G)$ or $k>n$, the total number $\B(G)$ of non-equivalent colorings of a graph $G$ is defined as 
$$\B(G)=\sum_{k=0}^nS(G,k)=\sum_{k=\chi(G)}^nS(G,k).$$

In other words, $\B(G)$ is the number of partitions of the vertex set of $G$ whose blocks are stable
sets (i.e., sets of pairwise non-adjacent vertices). This invariant has been studied by several authors in the last few years \cite{absil,Dun10,Duncan09,GT13,numcol,KN14} under the name of (graphical) Bell number
of $G$.

Let $\T(G)$ be the total number of stable sets in the set of non-equivalent colorings of a graph $G$. More precisely, we define 
$$\T(G)=\sum_{k=\chi(G)}^nkS(G,k).$$

We are interested in computing the average number $\A(G)$ of colors in the non-equivalent  colorings of $G$, that is
$$\A(G)=\frac{\T(G)}{\B(G)}.$$

\noindent Clearly, 
$\B(\overline{K}_n)=B_n, \T(\overline{K}_n)=T_{n-1}=B_{n+1}-B_{n} \mbox{, and } \A(\overline{K}_n)=\frac{B_{n+1}-B_{n}}{B_n}$
where $\overline{K}_n$ is the empty graph with $n$ vertices. As another example, consider the cycle $C_5$ on 5 vertices. As shown in Figure \ref{fig:C5}, there are five colorings of $C_5$ with 3 colors, five with 4 colors, and one with 5 colors, which gives $\B(C_5)=11$, $\T(C_5)=40$ and $\A(G)=\frac{40}{11}.$

\begin{figure}[!hbtp]
	\centering
	\includegraphics[scale = 0.9]{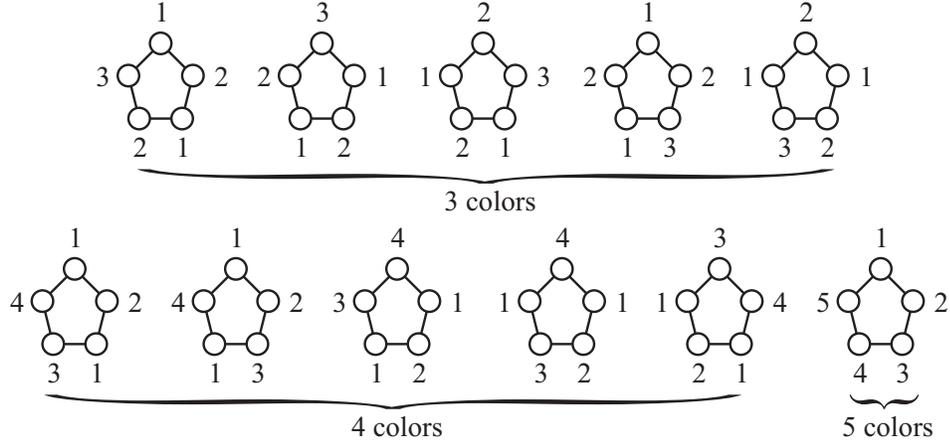}
	\caption{The non-equivalent colorings of $C_{5}$.}\label{fig:C5}
\end{figure}

This close link between Bell numbers and graph colorings  indicates that it is possible to use graph theory to derive inequalities for the Bell numbers. This is the aim of this article. The next section gives values of $\A(G)$ for some families of graphs and basic properties involving $\A(G)$. In Section \ref{sec:3}, we give several examples of inequalities for the Bell numbers that can be deduced from relations involving $\A(G)$. 

Let $u$ and $v$ be two vertices in a graph $G$. We denote by $G_{\mid uv}$ the graph obtained by identifying (merging) the vertices $u$ and $v$ and, if $u$ and $v$ are adjacent vertices, by removing the edge that links $u$ and $v$. If parallel edges are created, we keep only one. Also, if $u$ is adjacent to $v$, we denote by $G - uv$  the graph obtained from $G$ by removing the edge that links $u$ with $v$, while if $u$ is not adjacent to $v$, we denote by $G+uv$ the graph obtained by linking $u$ with $v$. In what follows, we let $K_n$, $P_n$ and $C_n$ be the complete graph of order $n$, the path of order $n$, and the cycle of order $n$, respectively. We denote the disjoint union of two graphs $G_1$ and $G_2$ by $G_1\cup G_2$. We refer to Diestel \cite{Diestel} for basic notions of graph theory that are not defined here.

\section{Some values and properties of $\A(G)$}\label{sec:2}

The \emph{deletion-contraction} rule (also often called the {\it Fundamental Reduction Theorem} \cite{DKT05}) is a well known method to compute $\B(G)$ \cite{Duncan09,KN14}.  More precisely, let $u$ and $v$ be any pair of distinct vertices of $G$. We have,
\begin{align} \label{rec_minusk}S(G, k) = S(G - uv, k) - S(G_{\mid uv}, k)& \mbox{ for every pair }u,v\mbox{ of adjacent vertices in }G,\\
S(G, k) = S(G + uv, k) + S(G_{\mid uv}, k) &\mbox{ for every pair }u,v\mbox{ of non-adjacent vertices in }G.
\end{align}
It follows that
\begin{align}
\left.
\begin{array}{ll}
\B(G) = \B(G - uv) - \B(G_{\mid uv})\\
\T(G) = \T(G - uv) - \T(G_{\mid uv})
\end{array}
\right\}& \mbox{ for every pair }u,v\mbox{ of adjacent vertices in }G,\label{rec_minus}\\
\left.
\begin{array}{ll}
\B(G) = \B(G + uv) + \B(G_{\mid uv})\\
\T(G) = \T(G + uv) + \T(G_{\mid uv})
\end{array}
\right\}& \mbox{ for every pair }u,v\mbox{ of non-adjacent vertices in }G.\label{rec_plus}
\end{align}

Let $v$ be a vertex in a graph $G$. We denote by $G-v$ the graph obtained from $G$ by removing $v$ and all its incident edges. A vertex of a graph $G$ is {\it dominating} if it is adjacent to all other vertices of $G$, and it is {\it simplicial} if its neighbors are pairwise adjacent. 

\begin{prop}\label{prop:1} If $G$ has a dominating vertex $v$, then $\A(G)=1+\A(G-v)$.
\end{prop}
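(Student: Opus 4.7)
The plan is to exhibit a bijection between the non-equivalent proper colorings of $G$ and those of $G-v$, and to show that this bijection shifts the number of color classes up by exactly one. Everything then follows by a short reindexing of the sums defining $\T$ and $\B$.

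The key observation, which drives the entire argument, is that since $v$ is adjacent to every other vertex of $G$, in any proper coloring of $G$ the color assigned to $v$ differs from the color of every other vertex, so the color class containing $v$ is forced to be the singleton $\{v\}$. Consequently, starting from a non-equivalent proper coloring of $G-v$ using $k$ color classes, adjoining $\{v\}$ as an additional color class produces a non-equivalent proper coloring of $G$ using $k+1$ color classes, and every non-equivalent coloring of $G$ is obtained in this way from exactly one non-equivalent coloring of $G-v$. Hence $S(G,k+1)=S(G-v,k)$ for every $k\geq 0$, and in particular
\[
\B(G)=\sum_{k}S(G,k)=\sum_{k}S(G-v,k-1)=\B(G-v).
\]

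Using the same identity to reindex the sum defining $\T(G)$, I would then write
\[
\T(G)=\sum_{k} k\,S(G,k)=\sum_{j}(j+1)\,S(G-v,j)=\T(G-v)+\B(G-v).
\]
Dividing this equality by $\B(G)=\B(G-v)$ yields $\A(G)=1+\A(G-v)$, as desired. There is no substantive obstacle in this argument: once the singleton observation about the dominating vertex is made, the rest is routine bookkeeping, and no appeal to the deletion-contraction identities \eqref{rec_minus}--\eqref{rec_plus} is needed.
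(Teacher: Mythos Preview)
Your proof is correct and follows essentially the same route as the paper: both hinge on the identity $S(G,k)=S(G-v,k-1)$ (which you justify via the singleton observation about the dominating vertex, while the paper simply asserts it as ``clear''), and then perform the same reindexing to obtain $\B(G)=\B(G-v)$ and $\T(G)=\T(G-v)+\B(G-v)$. There is no meaningful difference in approach.
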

\begin{proof}
	Clearly, $S(G,k)=S(G-v,k-1)$ for all $k$, which implies
\begin{align*}
\B(G)&=\sum_{k=\chi(G)}^nS(G,k)\\
 &=\sum_{k=\chi(G)}^nS(G-v,k-1)=\sum_{k=\chi(G-v)}^{n-1}S(G-v,k)= \B(G-v)\\
 \mbox{and}\quad\quad\\
\T(G)&=\sum_{k=\chi(G)}^nkS(G,k)\\
 &= \sum_{k=\chi(G)}^nkS(G-v,k-1)=
\sum_{k=\chi(G-v)}^{n-1}(k+1)S(G-v,k)
= \T(G-v)+\B(G-v).
\end{align*}
Hence, $\A(G)= \displaystyle\frac{ \T(G-v)+\B(G-v)}{\B(G-v)}=1+\frac{ \T(G-v)}{\B(G-v)}=1+\A(G-v).$
\end{proof}

Duncan \cite{Duncan09} has proved that if $G$ is a tree, then $S(G,k)=\stirlings{n-1}{k}$ for all $k=1,\cdots,n$. This leads to our second Proposition.

\begin{prop}\label{prop:2} Let $G$ be a tree of order $n$. Then $\B(G)=B_{n-1}$ and $\T(G)=B_n$.\end{prop}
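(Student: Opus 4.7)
The plan is to apply Duncan's enumeration $S(G,k)$ for trees, substitute it into the definitions of $\B(G)$ and $\T(G)$ from the introduction, and reindex each sum so that it reduces to a standard sum of Stirling numbers of the second kind. Reading Duncan's formula in its shifted form $S(G,k)=\stirlings{n-1}{k-1}$ for $k=1,\dots,n$ (the version compatible with the obvious fact that the all-singletons partition $k=n$ always gives one coloring), the proof is essentially bookkeeping.

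For the first identity, the substitution $j=k-1$ yields
\[
\B(G) \;=\; \sum_{k=\chi(G)}^{n} S(G,k) \;=\; \sum_{k=1}^{n} \stirlings{n-1}{k-1} \;=\; \sum_{j=0}^{n-1} \stirlings{n-1}{j} \;=\; B_{n-1},
\]
the last equality being the definition of $B_{n-1}$ recalled in the introduction.

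For the second identity, the same substitution gives $\T(G) = \sum_{j=0}^{n-1}(j+1)\stirlings{n-1}{j}$. Splitting this into two pieces and recognising $\sum_{j=0}^{n-1} j\stirlings{n-1}{j} = T_{n-2}$ (via $T_m=\sum_{k=0}^{m+1} k\stirlings{m+1}{k}$ with $m=n-2$) together with $\sum_{j=0}^{n-1} \stirlings{n-1}{j} = B_{n-1}$, one obtains $\T(G) = T_{n-2} + B_{n-1}$. Invoking the identity $T_m = B_{m+2}-B_{m+1}$ from the introduction with $m=n-2$ then collapses this to $\T(G) = (B_n-B_{n-1}) + B_{n-1} = B_n$.

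The step requiring the most care, rather than any real mathematical obstacle, will be the bookkeeping around the index shift $j=k-1$ and the vanishing boundary terms $\stirlings{n-1}{0}$ (for $n\geq 2$) and $\stirlings{n-1}{n}$, so that both reindexed sums align exactly with the standard definitions of $B_{n-1}$ and $T_{n-2}$; a sanity check at $n=1$ and $n=2$ (where $G$ is a single vertex or a single edge) confirms the formulas at the boundary.
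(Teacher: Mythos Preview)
Your proof is correct and follows essentially the same route as the paper: both substitute Duncan's formula $S(G,k)=\stirlings{n-1}{k-1}$, reindex via $j=k-1$, and split $\sum_{j}(j+1)\stirlings{n-1}{j}$ into the two pieces $(B_n-B_{n-1})+B_{n-1}$. The only cosmetic difference is that you name the first piece $T_{n-2}$ before invoking $T_m=B_{m+2}-B_{m+1}$, whereas the paper writes $B_n-B_{n-1}$ directly.
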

\begin{proof}
	Since $S(G,k)=\stirlings{n-1}{k-1}$, we immediately get 	\begin{align*}
	\B(G)=&\sum_{k=1}^{n}\stirlings{n-1}{k-1}=\sum_{k=0}^{n-1}\stirlings{n-1}{k}=B_{n-1}\\
	 \mbox{and}\hspace{2cm}\\
	 \T(G)=&\sum_{k=1}^{n}k\stirlings{n-1}{k-1}=\sum_{k=0}^{n-1}(k+1)\stirlings{n-1}{k}\\
	=&\sum_{k=0}^{n-1}k\stirlings{n-1}{k}+\sum_{k=0}^{n-1}\stirlings{n-1}{k}=(B_{n}-B_{n-1})+B_{n-1}=B_n.
	\end{align*}
\end{proof}

\begin{prop}\label{prop:3}  Let $T\cup pK_1$ be the graph obtained from a tree $T$ of order $n \ge 1$ by adding $p$ isolated verices. Then
	$\displaystyle\B(T\cup pK_1)=\sum_{i = 0}^{p} {p \choose i} B_{n+i-1}$ and 
	$\displaystyle\T(T\cup pK_1) = \sum_{i = 0}^{p} {p \choose i} B_{n+i}$.
\end{prop}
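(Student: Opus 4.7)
My plan is to prove both identities simultaneously by induction on $p$, using the non-edge recurrence (\ref{rec_plus}) to peel off one isolated vertex at a time, with Proposition \ref{prop:2} providing the base case.

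For $p=0$ the claim reduces to $\B(T)=B_{n-1}$ and $\T(T)=B_{n}$, which is exactly Proposition \ref{prop:2}. For the inductive step, fix a tree $T$ of order $n\ge 1$, choose one isolated vertex $u$ in the $pK_1$ part, and let $v$ be any vertex of $T$. Since $u$ and $v$ are non-adjacent, (\ref{rec_plus}) yields
\[
\B(T\cup pK_1)=\B\bigl((T\cup pK_1)+uv\bigr)+\B\bigl((T\cup pK_1)_{\mid uv}\bigr),
\]
together with the analogous identity for $\T$. The key observation is that $(T\cup pK_1)+uv$ is a tree $T'$ of order $n+1$ (namely, $T$ with $u$ grafted as a new leaf at $v$) disjoint union with the remaining $p-1$ isolated vertices, while $(T\cup pK_1)_{\mid uv}$ is simply $T\cup (p-1)K_1$, because $u$ has no other incident edges. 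Both resulting graphs have the form (tree)~$\cup$~$(p-1)K_1$, so the induction hypothesis applies to each.

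Substituting the inductive values gives
\[
\B(T\cup pK_1)=\sum_{i=0}^{p-1}\binom{p-1}{i}B_{(n+1)+i-1}+\sum_{i=0}^{p-1}\binom{p-1}{i}B_{n+i-1}=\sum_{i=0}^{p-1}\binom{p-1}{i}B_{n+i}+\sum_{i=0}^{p-1}\binom{p-1}{i}B_{n+i-1}.
\]
Reindexing the first sum by $j=i+1$, the term $B_{n-1}$ appears only from the second sum (with coefficient $\binom{p-1}{0}=\binom{p}{0}=1$), the term $B_{n+p-1}$ appears only from the first sum (with coefficient $\binom{p-1}{p-1}=\binom{p}{p}=1$), and for each $1\le i\le p-1$ the coefficient of $B_{n+i-1}$ is $\binom{p-1}{i-1}+\binom{p-1}{i}=\binom{p}{i}$ by Pascal's identity. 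The combined sum is therefore $\sum_{i=0}^{p}\binom{p}{i}B_{n+i-1}$, completing the induction for $\B$. The argument for $\T$ is identical up to shifting every Bell index by one.

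I do not anticipate a real obstacle; the whole argument hinges on the single observation that adding an edge between an isolated vertex and any vertex of $T$ produces a tree of one larger order, so that the two graphs on the right-hand side of (\ref{rec_plus}) both fit the inductive template. The remaining step is purely the Pascal-identity bookkeeping above.
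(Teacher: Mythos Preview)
Your proof is correct and follows essentially the same route as the paper: induction on $p$ with Proposition~\ref{prop:2} as the base case, one application of the non-edge recurrence~(\ref{rec_plus}) to attach an isolated vertex to $T$, and Pascal's identity to recombine the two sums. The paper's write-up is slightly terser but the argument is identical.
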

\begin{proof}
	For $p=0$, the result follows from Proposition \ref{prop:2}. For larger values of $p$, we proceed by induction. Let $T'$ be the tree obtained from $T$ by adding a new vertex and linking it to
	exactly one vertex in $T$.  Equations~(\ref{rec_plus}) give :
	\begin{align*}
	\B(T\cup pK_1)&=\B(T'\cup (p-1)K_1)+\B(T\cup (p-1)K_1)\\
	&=\sum_{i = 0}^{p-1} {p-1 \choose i}B_{n+i}+\sum_{i = 0}^{p-1} {p-1 \choose i}B_{n+i-1}\\
	&=\sum_{i = 1}^{p} {p-1 \choose i-1}B_{n+i-1}+\sum_{i = 0}^{p-1} {p-1 \choose i}B_{n+i-1}\\
	&=B_{n+p-1}+\sum_{i = 1}^{p-1} \left({p-1 \choose i-1}+{p-1 \choose i}\right)B_{n+i-1}+B_{n-1}\\
	&=\sum_{i = 0}^{p} {p \choose i} B_{n+i-1}.
	\end{align*}
	The proof for $\T(T\cup pK_1)$ is similar.
\end{proof}

\begin{prop}\label{prop:4} Let $C_n$ be a cycle of order $n \ge 3$. Then, 
	$$\displaystyle\B(C_n)=\sum_{j = 1}^{n-1} (-1)^{j+1} B_{n-j}\quad\mbox{ and }\quad\displaystyle\T(C_n)=\sum_{j = 1}^{n-1} (-1)^{j+1} B_{n-j+1}.$$
\end{prop}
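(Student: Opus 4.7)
The plan is to apply the deletion–contraction identities in~(\ref{rec_minus}) to an arbitrary edge $uv$ of $C_n$. Since $C_n - uv$ is the path $P_n$, and $C_{n\mid uv}$ is (after collapsing the parallel edges created by the identification) the cycle $C_{n-1}$, combining with Proposition~\ref{prop:2}, which gives $\B(P_n)=B_{n-1}$ and $\T(P_n)=B_n$, produces the recurrences
$$\B(C_n) = B_{n-1} - \B(C_{n-1}), \qquad \T(C_n) = B_n - \T(C_{n-1}).$$
I would then prove the two claimed closed forms simultaneously by induction on $n$.

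The base case $n=3$ I would verify directly: the triangle has a unique proper non-equivalent coloring, using exactly three colors, so $\B(C_3)=1$ and $\T(C_3)=3$, and these values match the formulas $B_2-B_1$ and $B_3-B_2$ respectively.

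For the inductive step, assuming the formulas hold for $C_{n-1}$, substitution into the recurrence for $\B$ yields
$$\B(C_n) = B_{n-1} - \sum_{j=1}^{n-2}(-1)^{j+1}B_{n-1-j}.$$
Re-indexing the sum via $i=j+1$ turns it into $\sum_{i=2}^{n-1}(-1)^{i}B_{n-i}$, and moving the sign outside rewrites the whole expression as $B_{n-1}+\sum_{i=2}^{n-1}(-1)^{i+1}B_{n-i}$. Recognising $B_{n-1}$ as the $i=1$ term collapses this into $\sum_{i=1}^{n-1}(-1)^{i+1}B_{n-i}$, which is the desired formula. The argument for $\T(C_n)$ is word-for-word identical after replacing $B_{n-1-j}$ by $B_{n-j}$ and $B_{n-1}$ by $B_n$, so the same bookkeeping closes the induction.

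There is no genuine analytical difficulty here; the only subtlety is the edge case $n=3$, where the contraction $C_{3\mid uv}$ initially has a double edge that must be collapsed to obtain $P_2$ rather than a formally undefined $C_2$. This is the reason I would handle $n=3$ as an explicit base case instead of letting the recurrence recurse one step further. Beyond that, verifying that the proposed closed form satisfies the two-term recurrence is a routine re-indexing of the alternating sum, which is the step I have sketched above.
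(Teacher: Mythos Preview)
Your proof is correct and follows essentially the same approach as the paper: deletion--contraction on an edge of $C_n$ to obtain the recurrence $\T(C_n)=B_n-\T(C_{n-1})$ (and similarly for $\B$), base case $n=3$, and a straightforward induction. The only cosmetic difference is that the paper cites Duncan for the $\B(C_n)$ formula and only writes out the $\T(C_n)$ argument, whereas you carry out both in parallel.
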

\begin{proof}
	Duncan \cite{Duncan09} proved that $\B(C_n) = \sum_{j = 1}^{n-1} (-1)^{j+1} B_{n-j}$. It is therefore sufficient to prove that $\T(C_n) = \sum_{j = 1}^{n-1} (-1)^{j+1} B_{n-j+1}$.
	
	If $n = 3$, then $\T(C_3) = 3 = B_{3} - B_{2}$.  If $n > 3$, Equations~(\ref{rec_minus}) together with the fact that $P_n$ is a tree give $\T(C_n) = \T(P_n) - \T(C_{n-1}) = B_{n} - \T(C_{n-1})$, and the result follows by induction.
\end{proof}

\begin{prop}\label{prop:5}  Let $C_n\cup pK_1$ be the graph obtained from a cycle of order $n \ge 3$ by adding $p$ isolated verices.
	Then
	$$\B(C_n\cup pK_1)=\sum_{j=1}^{n-1}(-1)^{j+1}\sum_{i = 0}^{p} {p \choose i} B_{n+i-j}\;\mbox{ and }\;\T(C_n\cup pK_1) =\sum_{j=1}^{n-1}(-1)^{j+1}\sum_{i = 0}^{p} {p \choose i} B_{n+i-j+1}.$$
\end{prop}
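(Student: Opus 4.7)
The plan is to extend the inductive proof of Proposition 4 by tracking the extra $p$ isolated vertices throughout. Fix $n\geq 3$, set $G=C_n\cup pK_1$, and let $uv$ be an edge of the cycle $C_n$. Then $G-uv=P_n\cup pK_1$, while $G_{\mid uv}=C_{n-1}\cup pK_1$ if $n\geq 4$ and $G_{\mid uv}=K_2\cup pK_1$ if $n=3$ (the triangle collapses to a single edge after identification and parallel-edge removal). Equations~(\ref{rec_minus}) therefore give
\begin{equation*}
\B(C_n\cup pK_1)=\B(P_n\cup pK_1)-\B\bigl(G_{\mid uv}\bigr),
\end{equation*}
together with the analogous identity for $\T$.

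I would prove both formulas by induction on $n$, with $p$ held fixed, since Proposition 3 handles every ``tree plus isolated vertices'' piece that arises. Concretely, it supplies $\B(P_n\cup pK_1)=\sum_{i=0}^{p}\binom{p}{i}B_{n+i-1}$, which is precisely the $j=1$ term of the claimed formula, and it evaluates $\B(K_2\cup pK_1)=\sum_{i=0}^{p}\binom{p}{i}B_{i+1}$, which plays the role of the $j=2$ term and closes the base case $n=3$. For the inductive step $n\geq 4$, I would substitute Proposition 3 for $\B(P_n\cup pK_1)$ and the induction hypothesis for $\B(C_{n-1}\cup pK_1)$ into the identity above.

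The only non-trivial step is the reindexing. The subtracted piece $-\B(C_{n-1}\cup pK_1)$ is an alternating sum running over $j=1,\dots,n-2$ in $B_{(n-1)+i-j}$, and the shift $j\mapsto j-1$ rewrites it as an alternating sum over $j=2,\dots,n-1$ in $B_{n+i-j}$ with signs $(-1)^{j+1}$; glued to the $j=1$ term produced by $\B(P_n\cup pK_1)$, this assembles $\sum_{j=1}^{n-1}(-1)^{j+1}\sum_{i=0}^{p}\binom{p}{i}B_{n+i-j}$. The proof for $\T(C_n\cup pK_1)$ is formally identical, with every $B_{n+i-j}$ replaced by $B_{n+i-j+1}$ and the $\T$-formula of Proposition 3 used in place of its $\B$-formula. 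The one place I expect to need care is the sign bookkeeping, since the minus sign in (\ref{rec_minus}) has to combine correctly with the alternating factor $(-1)^{j+1}$ coming from the induction hypothesis before the index shift consolidates everything.
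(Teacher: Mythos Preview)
Your proposal is correct and follows essentially the same route as the paper: induction on $n$ with $p$ fixed, using the deletion--contraction identity $\B(C_n\cup pK_1)=\B(P_n\cup pK_1)-\B(C_{n-1}\cup pK_1)$ together with Proposition~\ref{prop:3}, the base case $n=3$ handled via $G_{\mid uv}=K_2\cup pK_1=P_2\cup pK_1$, and the same index shift $j\mapsto j-1$ to reassemble the alternating sum. The only cosmetic difference is that the paper writes $P_2$ where you write $K_2$.
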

\begin{proof}
	For $p=0$, the result follows from Proposition \ref{prop:4}. For larger values of $p$, we proceed by induction. If $n=3$ then Equations~(\ref{rec_minus}) and Proposition \ref{prop:3} give
	\begin{align*}
	\B(C_3\cup pK_1)&=\B(P_{3}\cup pK_1)-\B(P_{2}\cup pK_1) \\
	&= \sum_{i = 0}^{p} {p \choose i} B_{3+i-1}-\sum_{i = 0}^{p} {p \choose i} B_{2+i-1}\\
	&=\sum_{j=1}^{2}(-1)^{j+1}\sum_{i = 0}^{p} {p \choose i} B_{3+i-j}.
	\end{align*}
	Hence, the result is valid for $n=3$. So assume $n>3$ and that the statement holds for smaller values of $n$:
	\begin{align*}
	\B(C_n\cup pK_1)&=\B(P_{n}\cup pK_1)-\B(C_{n-1}\cup pK_1)\\
	&=\sum_{i = 0}^{p} {p \choose i}B_{n+i-1}-\sum_{j=1}^{n-2}(-1)^{j+1}\sum_{i = 0}^{p} {p \choose i}B_{n+i-j-1}\\
	&=\sum_{i = 0}^{p} {p \choose i}B_{n+i-1}+\sum_{j=2}^{n-1}(-1)^{j+1}\sum_{i = 0}^{p} {p \choose i}B_{n+i-j}\\
	&=\sum_{j=1}^{n-1}(-1)^{j+1}\sum_{i = 0}^{p} {p \choose i} B_{n+i-j}.
	\end{align*}
	The proof for $\T(C_n\cup pK_1)$ is similar.
\end{proof}

\begin{prop}\label{prop:6}  Let $G$ be a graph with a simplicial vertex $v$. Then $\A(G)>\A(G-v)$.
\end{prop}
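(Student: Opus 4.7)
The plan is to derive a combinatorial recurrence relating $S(G,k)$ to $S(G-v,\cdot)$, sum it to express $\B(G)$ and $\T(G)$ in terms of simple statistics of $G-v$, and then observe that the strict inequality $\A(G)>\A(G-v)$ collapses to the nonnegativity of a variance plus one.

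First, I will prove that if $v$ has degree $d$, then
$$S(G,k) = (k-d)\,S(G-v,k) + S(G-v,k-1).$$
Since $v$ is simplicial, its $d$ neighbors form a clique in $G-v$ and therefore occupy $d$ pairwise distinct blocks of any stable partition of $V(G-v)$. To extend a stable partition of $V(G-v)$ with $k$ blocks to a stable partition of $V(G)$, I may either place $v$ into one of the $k-d$ blocks containing no neighbor of $v$ (yielding an extension with $k$ blocks), or declare $\{v\}$ to be a new block (yielding an extension with $k+1$ blocks). Grouping the extensions by the number of blocks of the resulting partition of $V(G)$ gives the displayed identity.

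Next, set $N=\B(G-v)$, $S=\T(G-v)$, and $M=\sum_k k^2\,S(G-v,k)$. Summing the recurrence against $1$ and against $k$ yields
$$\B(G) = S + (1-d)N \qquad\text{and}\qquad \T(G) = M + (1-d)S + N.$$
Since $\B(G)$ and $\B(G-v)$ are positive, the claim $\A(G)>\A(G-v)$ is equivalent to $\T(G)\B(G-v) - \T(G-v)\B(G)>0$. Substituting the two expressions above, the $(1-d)$ cross-terms cancel and the numerator simplifies to $NM + N^2 - S^2 = N^{2}\bigl(\mathrm{Var}(k)+1\bigr)$, where $\mathrm{Var}(k) = M/N - (S/N)^2$ is the variance of the number of colors used in a uniformly random non-equivalent coloring of $G-v$. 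Since variances are nonnegative, this quantity is at least $N^2>0$, which gives the desired strict inequality.

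The main obstacle is spotting the clean three-term decomposition of $S(G,k)$ enabled by simpliciality; once it is in hand the algebra is routine, and the strictness is supplied by the additive $+1$ rather than by any delicate lower bound on $\mathrm{Var}(k)$, so the argument works uniformly regardless of $d$ or of whether $G-v$ happens to have a deterministic number of colors.
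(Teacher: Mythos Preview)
Your proof is correct and follows essentially the same approach as the paper: both start from the simplicial recurrence $S(G,k)=(k-d)S(G-v,k)+S(G-v,k-1)$, sum it to express $\B(G)$ and $\T(G)$ in terms of moments of $S(G-v,\cdot)$, and then show $\B(G-v)\T(G)-\T(G-v)\B(G)>0$. Your packaging of the final quantity as $N^2(\mathrm{Var}(k)+1)$ is a bit cleaner than the paper's direct double-sum expansion $\sum_k S(G-v,k)^2+\sum_{k<k'}S(G-v,k)S(G-v,k')\bigl((k-k')^2+2\bigr)$, but the two expressions are identical and the argument is the same.
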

\begin{proof}
	Let $r$ be the number of neighbors of $v$ in $G$. We have $S(G,k)=(k-r)S(G-v,k)+S(G-v,k-1)$. Assuming that $G$ is of order $n$, we have
	\begin{align*}
	\B(G)&=\sum_{k=0}^{n}S(G,k)=\sum_{k=0}^{n-1}kS(G-v,k)-r\sum_{k=0}^{n-1}S(G-v,k)+\sum_{k=0}^{n-1}S(G-v,k)\\
	&=\sum_{k=0}^{n-1}(k-r+1)S(G-v,k)\\
	\mbox{and }\;\\
	\T(G)&=\sum_{k=0}^{n}kS(G,k)=\sum_{k=0}^{n}(k^2-kr)S(G-v,k)+\sum_{k=0}^{n}kS(G-v,k-1)\\
	&=\sum_{k=0}^{n-1}(k^2-kr)S(G-v,k)+\sum_{k=0}^{n-1}(k+1)S(G-v,k)\\
	&=\sum_{k=0}^{n}(k^2-k(r-1)+1)S(G-v,k).
	\end{align*}
	\noindent We therefore have
	\begin{align*}
&\B(G-v)\T(G)-\T(G-v)\B(G) \\
=&\sum_{k=0}^{n-1}S(G-v,k)\sum_{k'=0}^{n}(k'^2{-}k'(r{-}1){+}1)S(G-v,k')-\sum_{k=0}^{n-1}kS(G-v,k)\sum_{k'=0}^{n-1}(k'{-}r{+}1)S(G-v,k')\\
=&\sum_{k=0}^{n-1}\left(S(G-v,k)\right)^2(k^2-k(r-1)+1-k(k-r+1)\\
&+\sum_{k=0}^{n-2}\sum_{k'=k+1}^{n-1}S(G-v,k)S(G-v,k')(k'^2{-}k'(r{-}1){+}1{+}k^2{-}k(r{-}1){+}1{-}k(k'{-}r{+}1){-}k'(k{-}r{+}1))\\
=&\sum_{k=0}^{n-1}\left(S(G-v,k)\right)^2+\sum_{k=0}^{n-2}\sum_{k'=k+1}^{n-1}S(G-v,k)S(G-v,k')\left((k-k')^2+2\right)>0
\end{align*}
which implies $\A(G)-\A(G-v)=\displaystyle\frac{\T(G)}{\B(G)}-\frac{\T(G-v)}{\B(G-v)}=\frac{\B(G-v)\T(G)-\T(G-v)\B(G)}{\B(G)\B(G-v)}~>~0.\qedhere$	
\end{proof}



\begin{prop}\label{prop:7}  Let $G, H$ and $F_1,\cdots,F_r$ be $r+2$ graphs, and let $\alpha_1,\cdots,\alpha_r$ be $r$ positive numbers such that
\begin{itemize}
	\item $\B(G)=\B(H)+\sum_{i=1}^r\alpha_i\B(F_i)$
	\item $\T(G)=\T(H)+\sum_{i=1}^r\alpha_i\T(F_i)$
	\item $\A(F_i)<\A(H)$ for all $i=1,\cdots,r$.
\end{itemize}
Then $\A(G)<\A(H)$.
\end{prop}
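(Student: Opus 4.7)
The plan is essentially linear algebra applied to the ratio $\A(X)=\T(X)/\B(X)$. Since $\B(X)>0$ for every graph $X$ (there is always at least one coloring), the hypothesis $\A(F_i)<\A(H)$ can be rewritten as the equivalent scalar inequality
\[
\T(F_i) \;<\; \A(H)\,\B(F_i) \qquad \text{for each } i=1,\dots,r.
\]
This is the crucial reformulation: it turns a condition about ratios into a condition about a linear combination with the common scalar $\A(H)$.

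Next I would multiply each of these inequalities by the positive weight $\alpha_i>0$ and sum them, giving
\[
\sum_{i=1}^{r}\alpha_i\T(F_i) \;<\; \A(H)\sum_{i=1}^{r}\alpha_i\B(F_i).
\]
Adding $\T(H)=\A(H)\B(H)$ to both sides and invoking the two structural hypotheses $\T(G)=\T(H)+\sum_i\alpha_i\T(F_i)$ and $\B(G)=\B(H)+\sum_i\alpha_i\B(F_i)$, the left-hand side becomes $\T(G)$ and the right-hand side becomes $\A(H)\,\B(G)$. Dividing by $\B(G)>0$ yields $\A(G)<\A(H)$.

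There is no real obstacle here: the only things to check are the positivity of the various $\B$'s (needed to divide and to preserve the direction of inequalities) and the positivity of the weights $\alpha_i$ (given in the hypothesis, and needed so that multiplying a strict inequality by $\alpha_i$ preserves its direction). The argument does not require $r\ge 1$ in any essential way, but at least one $i$ with $\alpha_i>0$ is what makes the final inequality strict; alternatively, strictness already follows from $\T(H)=\A(H)\B(H)$ combined with any single strict inequality $\T(F_i)<\A(H)\B(F_i)$ weighted by $\alpha_i>0$. Thus the whole proof fits in a few lines of elementary manipulation.
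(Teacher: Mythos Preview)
Your argument is correct and is essentially the same as the paper's: both rewrite $\A(F_i)<\A(H)$ as $\T(F_i)<\A(H)\,\B(F_i)$ (the paper writes $\A(H)$ as $\T(H)/\B(H)$), then use the positivity of the $\alpha_i$ and of the $\B$-values to bound $\T(G)/\B(G)$ above by $\A(H)$.
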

\begin{proof}
	Since $\A(F_i)<\A(H)$, we have $\T(F_i)<\frac{\T(H)\B(F_i)}{\B(H)}$ for $i=1,\cdots,r$. Hence,
	\begin{align*}
	\A(G)&=\frac{\T(G)}{\B(G)}=\frac{\T(H)+\sum_{i=1}^r\alpha_i\T(F_i)}{\B(H)+\sum_{i=1}^r\alpha_i\B(F_i)}\\
	&<\frac{\T(H)+\sum_{i=1}^r\alpha_i\frac{\T(H)\B(F_i)}{\B(H)}}{\B(H)+\sum_{i=1}^r\alpha_i\B(F_i)}=\frac{\T(H)\left(\B(H)+\sum_{i=1}^r\alpha_i\B(F_i)\right)}
	{\B(H)\left(\B(H)+\sum_{i=1}^r\alpha_i\B(F_i)\right)}\\
	&=\frac{\T(H)}{\B(H)}=\A(H).\qedhere
	\end{align*}
	\end{proof}

\section{Inequalities for the Bell numbers}\label{sec:3}
In this section, we show how to derive inequalities for the Bell numbers, using properties related to the average number $\A(G)$ of colors in non-equivalent colorings of $G$. We start by analyzing paths. As already mentioned, $P_n\cup pK_1$ is the graph obtained by adding $p$ isolated vertices to a path on $n$ vertices.

\begin{theorem}
	$\A(P_{n}\cup (p+1)K_1)<\A(P_{n+1}\cup pK_1)$ for all $n\geq 1$ and $p\geq 0$.
\end{theorem}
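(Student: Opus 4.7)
The plan is to combine the non-adjacent deletion-contraction identity (Equation~(\ref{rec_plus})), Proposition~\ref{prop:6} on simplicial vertices, and Proposition~\ref{prop:7}. The setup is to identify, inside $G := P_n \cup (p+1)K_1$, a specific non-adjacent pair $\{u,v\}$ whose edge-addition produces $H := P_{n+1} \cup p K_1$ and whose identification produces a strictly ``smaller'' graph $F$ for which $\A(F) < \A(H)$.

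First, I would choose $u$ to be an endpoint of $P_n$ and $v$ to be one of the $p+1$ isolated vertices. These are non-adjacent, so Equation~(\ref{rec_plus}) applies. The graph $G + uv$ attaches the isolated vertex $v$ to the endpoint $u$ of $P_n$, turning the path into $P_{n+1}$ while leaving $p$ isolated vertices untouched, so $G+uv = P_{n+1}\cup pK_1$. The graph $G_{\mid uv}$ identifies $v$ (which has empty neighborhood) with $u$, which amounts to simply deleting $v$, so $G_{\mid uv} = P_n \cup pK_1$. Hence
\begin{align*}
\B(P_n\cup (p+1)K_1) &= \B(P_{n+1}\cup pK_1) + \B(P_n\cup pK_1),\\
\T(P_n\cup (p+1)K_1) &= \T(P_{n+1}\cup pK_1) + \T(P_n\cup pK_1).
\end{align*}

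Second, I would establish the comparison $\A(P_n\cup pK_1) < \A(P_{n+1}\cup pK_1)$. Since $n \ge 1$, the path $P_{n+1}$ has at least two vertices, so any endpoint of $P_{n+1}$ has exactly one neighbor and is therefore a simplicial vertex of $P_{n+1}\cup pK_1$. Removing this endpoint yields precisely $P_n \cup pK_1$, so Proposition~\ref{prop:6} provides the desired strict inequality.

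Third, I would invoke Proposition~\ref{prop:7} with $H = P_{n+1}\cup pK_1$, $r=1$, $F_1 = P_n \cup pK_1$, and $\alpha_1 = 1$. The two displayed identities above verify the additive hypotheses on $\B$ and $\T$, and the previous paragraph supplies $\A(F_1) < \A(H)$. The conclusion $\A(G) < \A(H)$ is exactly the claim of the theorem. There is no real obstacle here; the content of the argument is entirely in picking the right non-adjacent pair $\{u,v\}$ so that both $G+uv$ and $G_{\mid uv}$ stay within the family of ``path plus isolated vertices,'' and in noticing that the strict inequality needed to feed Proposition~\ref{prop:7} is handed to us for free by the simpliciality of path endpoints.
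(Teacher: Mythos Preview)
Your proof is correct and follows essentially the same route as the paper: apply Equations~(\ref{rec_plus}) to obtain the decomposition $\B(P_n\cup (p{+}1)K_1)=\B(P_{n+1}\cup pK_1)+\B(P_n\cup pK_1)$ (and likewise for $\T$), use Proposition~\ref{prop:6} to get $\A(P_n\cup pK_1)<\A(P_{n+1}\cup pK_1)$, and then conclude with Proposition~\ref{prop:7}. You simply spell out more explicitly which pair $\{u,v\}$ is used and why the endpoint of $P_{n+1}$ is simplicial.
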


\begin{proof}
	It follows from Equations (\ref{rec_plus}) that 
	\begin{align*}
	&\B(P_{n}\cup (p+1)K_1)=\B(P_{n+1}\cup pK_1)+\B(P_{n}\cup pK_1)\\
	\mbox{and}\quad\quad&\\
&\T(P_{n}\cup (p+1)K_1)=\T(P_{n+1}\cup pK_1)+\T(P_{n}\cup pK_1).\end{align*}
	Also, we know from Proposition \ref{prop:6} that $\A(P_{n}\cup pK_1)<\A(P_{n+1}\cup pK_1)$. Hence, it follows from Proposition \ref{prop:7} that $\A(P_{n}\cup (p+1)K_1)<\A(P_{n+1}\cup pK_1)$.
\end{proof}
\noindent Proposition \ref{prop:3} immediately gives the following Corollary.
\begin{corollary}\label{cor:9}If $n\geq 1$ and $p\geq 0$ then 
	\begin{align*}
	\mbox{   }\quad\displaystyle\left(\sum_{i = 0}^{p+1} {p+1 \choose i} B_{n+i}\right)
	\left(\sum_{i = 0}^{p} {p \choose i} B_{n+i}\right)
	<
	\left(\sum_{i = 0}^{p+1} {p+1 \choose i} B_{n+i-1}\right)
	\left(\sum_{i = 0}^{p} {p \choose i} B_{n+i+1}\right).\end{align*}
\end{corollary}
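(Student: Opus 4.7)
The plan is purely substitutional: take the inequality $\A(P_n\cup(p+1)K_1)<\A(P_{n+1}\cup pK_1)$ that has just been established in the theorem, expand $\A=\T/\B$, cross-multiply by the strictly positive denominators, and replace the four quantities that appear by the explicit Bell-sum formulas of Proposition \ref{prop:3}.

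Concretely, since a path is a tree, applying Proposition \ref{prop:3} to $P_m$ (a tree of order $m$) with $q$ isolated vertices added gives
\[\B(P_m\cup qK_1)=\sum_{i=0}^{q}\binom{q}{i}B_{m+i-1}\quad\text{and}\quad\T(P_m\cup qK_1)=\sum_{i=0}^{q}\binom{q}{i}B_{m+i}.\]
I would specialize this in two ways. Taking $(m,q)=(n,p+1)$ produces the two sums involving $\binom{p+1}{i}$ that appear in the corollary. Taking $(m,q)=(n+1,p)$ and using the simplifications $(n+1)+i-1=n+i$ and $(n+1)+i=n+i+1$ produces the two sums involving $\binom{p}{i}$. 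Cross-multiplying the theorem's inequality by $\B(P_n\cup(p+1)K_1)\,\B(P_{n+1}\cup pK_1)>0$ and substituting the four identities yields precisely the displayed inequality.

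There is no real obstacle: the argument is entirely notational. The only point that deserves a moment of care is the index shift between $\B$ (which uses $B_{m+i-1}$) and $\T$ (which uses $B_{m+i}$), so that the products $\T(P_n\cup(p+1)K_1)\,\B(P_{n+1}\cup pK_1)$ and $\B(P_n\cup(p+1)K_1)\,\T(P_{n+1}\cup pK_1)$ are correctly matched with the left- and right-hand sides of the stated inequality, respectively.
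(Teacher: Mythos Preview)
Your proposal is correct and follows exactly the paper's approach: the corollary is obtained by cross-multiplying the inequality $\A(P_{n}\cup(p+1)K_1)<\A(P_{n+1}\cup pK_1)$ from the preceding theorem and substituting the Bell-sum formulas for $\B$ and $\T$ of $P_m\cup qK_1$ from Proposition~\ref{prop:3}. Your identification of which product becomes the left-hand side and which becomes the right-hand side is also correct.
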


\begin{examples}
	\noindent For $p=0$ and $n\geq 1$, Corollary \ref{cor:9} provides the following inequality  :
	$$(B_n+B_{n+1})B_n<(B_{n-1}+B_n)B_{n+1}
	\iff B_n^2<B_{n-1}B_{n+1}.
$$
	This inequality for the Bell numbers also follows from Proposition \ref{prop:6}. Indeed, $P_n$ is obtained from $P_{n+1}$ by removing a vertex of degree 1, which implies
	$$\A(P_n)<\A(P_{n+1})\iff \frac{B_{n}}{B_{n-1}}<\frac{B_{n+1}}{B_{n}}\iff B_n^2<B_{n-1}B_{n+1}.$$
		
		Note that  Engel~\cite{Engel94} has shown that the sequence $(B_n)_{n\geq 0}$ is log-convex, which implies ${B_n^2\leq B_{n-1}B_{n+1}}$ (with a non-strict inequality) for $n\geq 1$. Recently, Alzer~\cite{Alzer19} has proved that the sequence $(B_n)_{n\geq 0}$ is strictly log-convex by showing that  
	$$
	B_{n-1}B_{n+1} - B_n^2 = \frac{1}{2e^2} \sum_{k=2}^{\infty} \sum_{j=1}^{k-1} \frac{j^{n-1}(k-j)^{n-1}}{j!(k-j)!}(k-2j)^2
	$$
	for all $n\geq 2$. Since $B_1^2=1<2=B_{0}B_{2}$, this also implies  $B_n^2<B_{n-1}B_{n+1}$ for all $n\geq 1$.\\
	
	\noindent As a second example, assume $p=1$ and $n\geq 1$. Corollary \ref{cor:9} provides the following inequality for the Bell numbers, which also follows from the strict log-convexity of the sequence $(B_n)_{n\geq 0}$:
	\begin{align*}
	&(B_n+B_{n+1}+B_{n+2})(B_n+B_{n+1})<(B_{n-1}+B_n+B_{n+1})(B_{n+1}+B_{n+2})\\
	\iff&B_{n}(B_{n}+B_{n+1})<B_{n-1}(B_{n+1}+B_{n+2}).
	\end{align*}
\end{examples}
For $n\geq 3$ and $r\geq 0$, we denote $H_{n,r}$ the graph obtained by linking one extremity of $P_r$ to one vertex of $C_n$ (see Figure \ref{fig:HNP}). For $r=0$,  $H_{n,0}$ is equal to $C_n$.
Also, $H_{n,r}\cup pK_1$ is the graph obtained from $H_{n,r}$ by adding $p$ isolated vertices. We now compare $\A(H_{3,n-3}\cup pK_1)$ with $\A(P_{n+1}\cup pK_1)$ to derive new inequalities involving the Bell numbers.

\begin{figure}[!hbtp]
	\centering
	\includegraphics[scale = 1.1]{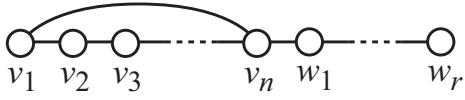}
	\caption{The graph $H_{n,r}$}\label{fig:HNP}
\end{figure}

\begin{theorem}\label{thm:10}
	$\A(H_{3,n-3}\cup pK_1)<\A(P_{n+1}\cup pK_1)$ for all $n\geq 4$ and $p\geq 0$.
\end{theorem}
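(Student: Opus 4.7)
The plan is to apply the non-edge contraction rule~(\ref{rec_plus}) to $P_{n+1}\cup pK_1$ at the non-adjacent pair of path vertices $v_1,v_4$, where $v_1,v_2,\dots,v_{n+1}$ denote the vertices of $P_{n+1}$ in order. A direct inspection shows that $P_{n+1}+v_1v_4=H_{4,n-3}$, because the added edge closes a $4$-cycle on $\{v_1,v_2,v_3,v_4\}$ with the rest of the path pendant at $v_4$, and that $(P_{n+1})_{|v_1v_4}=H_{3,n-3}$, because identifying $v_1$ and $v_4$ leaves a triangle on $\{v_2,v_3,v_1{=}v_4\}$, still carrying the pendant path $v_5\cdots v_{n+1}$. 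The $p$ isolated vertices play no role in the operation, so equations~(\ref{rec_plus}) give
\begin{align*}
\B(P_{n+1}\cup pK_1)&=\B(H_{4,n-3}\cup pK_1)+\B(H_{3,n-3}\cup pK_1),\\
\T(P_{n+1}\cup pK_1)&=\T(H_{4,n-3}\cup pK_1)+\T(H_{3,n-3}\cup pK_1).
\end{align*}

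Consequently, $\A(P_{n+1}\cup pK_1)$ is a weighted average of $\A(H_{4,n-3}\cup pK_1)$ and $\A(H_{3,n-3}\cup pK_1)$ with positive weights equal to the corresponding $\B$-values. The elementary calculation behind Proposition~\ref{prop:7}, applied in the ``reverse'' direction, shows that this weighted average strictly exceeds $\A(H_{3,n-3}\cup pK_1)$ exactly when $\A(H_{4,n-3}\cup pK_1)>\A(H_{3,n-3}\cup pK_1)$, so the theorem reduces to this auxiliary inequality.

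Establishing $\A(H_{4,n-3}\cup pK_1)>\A(H_{3,n-3}\cup pK_1)$ is what I view as the main obstacle. To attack it, I would apply rule~(\ref{rec_plus}) again, this time to $H_{4,n-3}$ at the diagonal pair $a,c$ of its $4$-cycle: the addition $H_{4,n-3}+ac$ yields a graph $G$ in which the vertex $b$ becomes simplicial (its neighbors $a,c$ are now adjacent), while the contraction $(H_{4,n-3})_{|ac}$ equals $P_n$. Proposition~\ref{prop:6} applied to $b$ in $G$ gives $\A(G)>\A(G-b)=\A(H_{3,n-3})$; Proposition~\ref{prop:6} applied to a simplicial triangle vertex of $H_{3,n-3}$, combined with the strict log-convexity of $(B_n)$ (Alzer's identity invoked in the preceding examples), yields $\A(P_n)<\A(H_{3,n-3})$. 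The values $\A(G)$ and $\A(P_n)$ thus straddle $\A(H_{3,n-3})$, and a quantitative balancing of the positive gap $(\A(G)-\A(H_{3,n-3}))\B(G)$ against the negative gap $(\A(H_{3,n-3})-\A(P_n))\B(P_n)$ remains to be carried out. Unwinding the closed-form expressions for $\B$ and $\T$, this balancing reduces to the Bell-number inequality $B_{n+1}(B_{n-1}-B_{n-2})>B_n(B_n-B_{n-1})$ for $n\geq 4$, which I expect to verify using Alzer's sum-of-squares representation; the extension to general $p\geq 0$ is then straightforward, since the entire decomposition is stable under adjoining isolated vertices.
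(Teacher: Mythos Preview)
Your graph-theoretic decompositions are correct, but they are not doing any work: the ``quantitative balancing'' you defer to the end is precisely the original statement in disguise. Concretely, once you write $\B(P_{n+1})=\B(H_{4,n-3})+\B(H_{3,n-3})$ and $\T(P_{n+1})=\T(H_{4,n-3})+\T(H_{3,n-3})$, the inequality $\A(H_{4,n-3})>\A(H_{3,n-3})$ is \emph{equivalent} to $\A(P_{n+1})>\A(H_{3,n-3})$, so your first reduction is trivially reversible. Your second decomposition $\B(H_{4,n-3})=\B(G)+\B(P_n)$ likewise produces two terms that straddle $\A(H_{3,n-3})$, and the sign of their weighted average relative to $\A(H_{3,n-3})$ is determined by
\[
\bigl(\A(G)-\A(H_{3,n-3})\bigr)\B(G)\;-\;\bigl(\A(H_{3,n-3})-\A(P_n)\bigr)\B(P_n)
\;=\;\frac{\T(H_{4,n-3})\B(H_{3,n-3})-\B(H_{4,n-3})\T(H_{3,n-3})}{\B(H_{3,n-3})},
\]
which is again the original cross-difference. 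For $p=0$ this is exactly the inequality $B_{n+1}(B_{n-1}-B_{n-2})>B_n(B_n-B_{n-1})$ you write down at the end---and that inequality is the Corollary \emph{of} Theorem~\ref{thm:10}, not an independent input. So nothing has been reduced; the analytic work (a Dobi\'nski-type estimate) is still the entire proof.

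Your claim that the passage from $p=0$ to general $p$ is ``straightforward'' is also not justified. Adjoining isolated vertices preserves the decompositions, but it does not preserve the truth of the target inequality for free: for general $p$ the numerator becomes
\[
\sum_{i,\ell}\binom{p}{i}\binom{p}{\ell}\Bigl(B_{n+i+1}(B_{n+\ell-1}-B_{n+\ell-2})-B_{n+i}(B_{n+\ell}-B_{n+\ell-1})\Bigr),
\]
which mixes indices and is not a positive combination of $p=0$ instances. The paper handles this by expanding every Bell number via Dobi\'nski's formula from the outset, obtaining after symmetrisation
\[
e^{2}f(n,p)=\sum_{j>k\ge 1}\frac{(jk)^{n-2}}{j!\,k!}(j-k)(j+1)^{p}(k+1)^{p}\bigl((k-1)j^{2}-(j-1)k^{2}\bigr),
\]
and then checks by a short case analysis (the only delicate rows are $k=1,2$) that this sum is positive for $n\ge 4$. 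That direct computation is the actual content of the theorem; your structural detour does not circumvent it.
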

\begin{proof}
Note first that Equations (\ref{rec_minus}) give $\B(H_{3,n-3}\cup pK_1)=\B(P_n\cup pK_1)-\B(P_{n-1}\cup pK_1)$ and $\T(H_{3,n-3}\cup pK_1)=\T(P_n\cup pK_1)-\T(P_{n-1}\cup pK_1)$.
Hence,
\begin{align*}
&\A(P_{n+1}\cup pK_1)-\A(H_{3,n-3}\cup pK_1)=
\frac{\T(P_{n+1}\cup pK_1)}{\B(P_{n+1}\cup pK_1)}-\frac{\T(H_{3,n-3}\cup pK_1)}{\B(H_{3,n-3}\cup pK_1)}\\
=&\frac{\T(P_{n+1}\cup pK_1)}{\B(P_{n+1}\cup pK_1)}-
\frac{\T(P_n\cup pK_1)-\T(P_{n-1}\cup pK_1)}{\B(P_n\cup pK_1)-\B(P_{n-1}\cup pK_1)}\\
=&\frac{\T(P_{n+1}\cup pK_1)\left(\B(P_n\cup pK_1)-\B(P_{n-1}\cup pK_1)\right)
	-\B(P_{n+1}\cup pK_1)\left(\T(P_n\cup pK_1)-\T(P_{n-1}\cup pK_1)\right)}
{\B(P_{n+1}\cup pK_1)\B(H_{3,n-3}\cup pK_1)}.
\end{align*}
Let $f(n,p)$ be the numerator of the above fraction. It follows from Proposition \ref{prop:3} that
$$f(n,p)=\sum_{i = 0}^{p} {p \choose i} B_{n+i+1}\left(\sum_{\ell = 0}^{p} {p \choose \ell} (B_{n+\ell-1}-B_{n+\ell-2})\right)
-\sum_{i = 0}^{p} {p \choose i} B_{n+i}\left(\sum_{\ell = 0}^{p} {p \choose \ell} (B_{n+\ell}-B_{n+\ell-1})\right).$$ 
It remains to prove that $f(n,p)>0$ for all $n\geq 4$ and $p>0$.
Since $B_n=\frac{1}{e}\sum_{k=1}^{\infty}\frac{k^{n}}{k!}$, we have 
$$B_{n}-B_{n-1}=\frac{1}{e}\sum_{k=1}^{\infty}\left(\frac{k^{n}}{k!}-\frac{k^{n-1}}{k!}\right)=\frac{1}{e}\sum_{k=2}^{\infty}\frac{k^{n-1}}{k!}(k-1).$$
Hence,
\begin{align*}
e^2f(n,p)=&\sum_{i=0}^{p}\sum_{\ell=0}^{p}\sum_{j=1}^{\infty}\sum_{k=2}^{\infty}{p \choose i}{p \choose \ell}\frac{j^{n+i+1}}{j!}\frac{k^{n+\ell-2}}{k!}(k-1)
-
\sum_{i=0}^{p}\sum_{\ell=0}^{p}\sum_{j=1}^{\infty}\sum_{k=2}^{\infty}{p \choose i}{p \choose \ell}\frac{j^{n+i}}{j!}\frac{k^{n+\ell-1}}{k!}(k-1)\\
=&\sum_{i=0}^{p}\sum_{\ell=0}^{p}\sum_{j=1}^{\infty}\sum_{k=2}^{\infty}{p \choose i}{p \choose \ell}\frac{j^{n+i}}{j!}\frac{k^{n+\ell-2}}{k!}(k-1)(j-k)\\
=&\sum_{i=0}^{p}\sum_{\ell=0}^{p}\sum_{j>k\geq 1}{p \choose i}{p \choose \ell}\left(\frac{j^{n+i}k^{n+\ell-2}}{j!k!}(k-1)(j-k)-\frac{j^{n+i-2}k^{n+\ell}}{j!k!}(j-1)(j-k)\right)\\
=&\sum_{i=0}^{p}\sum_{\ell=0}^{p}\sum_{j>k\geq 1}{p \choose i}{p \choose \ell}\frac{j^{n-2}k^{n-2}}{j!k!}(j-k)
\left(j^{i+2}k^{\ell}(k-1)-j^{i}k^{\ell+2}(j-1)\right)\\
=&\sum_{j>k\geq 1}\frac{j^{n-2}k^{n-2}}{j!k!}(j-k)\left(
(k-1)j^2\sum_{i=0}^{p}{p \choose i}j^i\sum_{\ell=0}^{p}{p \choose \ell}k^{\ell}
-(j-1)k^2\sum_{i=0}^{p}{p \choose i}j^i\sum_{\ell=0}^{p}{p \choose \ell}k^{\ell}
\right)\\
=&\sum_{j>k\geq 1}\frac{j^{n-2}k^{n-2}}{j!k!}(j-k)
(j+1)^p(k+1)^p\left((k-1)j^2-(j-1)k^2\right).
\end{align*}
Let $g(j,k)=(k-1)j^2-(j-1)k^2$. We have proved that 
$$e^2f(n,p)=\sum_{j>k\geq 1}\frac{j^{n-2}k^{n-2}}{j!k!}(j-k)
(j+1)^p(k+1)^pg(j,k).$$
Note that $g(j,1)=1-j$, $g(j,2)=(j-2)^2$, and
$$g(j,k)=j^2k-j^2-jk^2+k^2=(j-k)(jk-j-k)=(j-k)\left(k(\frac{j}{2}-1)+j(\frac{k}{2}-1)\right).$$
Hence, $g(j,k)>0$ for $j>k\geq 3$, and it remains to prove that
$$\sum_{k=1}^2\sum_{j=k+1}^{\infty}\frac{j^{n-2}k^{n-2}}{j!k!}(j-k)
(j+1)^p(k+1)^pg(j,k)>0.$$
We have
\begin{align*}
&\sum_{k=1}^2\sum_{j=k+1}^{\infty}\frac{j^{n-2}k^{n-2}}{j!k!}(j-k)
(j+1)^p(k+1)^pg(j,k)\\
=&\sum_{j=3}^{\infty}2^{n-3}\frac{j^{n-2}}{j!}3^p(j+1)^p(j-2)^3-
\sum_{j=2}^{\infty}\frac{j^{n-2}}{j!}2^p(j+1)^p(j-1)^2\\
\geq&12^{p}\left(\sum_{j=3}^{5}2^{n-3}\frac{j^{n-2}}{j!}(j-2)^3-
\sum_{j=2}^{5}\frac{j^{n-2}}{j!}(j-1)^2\right)+\sum_{j=6}^{\infty}\frac{j^{n-2}}{j!}2^p(j+1)^p\left(
2^{n-3}(j-2)^3-(j-1)^2
\right)\\
=&12^{p}\left(\frac{1}{6}2^{n-3}3^{n-2}+\frac{8}{24}2^{n-3}4^{n-2}+\frac{27}{120}2^{n-3}5^{n-2}-\frac{1}{2}2^{n-2}-\frac{4}{6}3^{n-2}-\frac{9}{24}4^{n-2}-\frac{16}{120}5^{n-2}\right)\\
&+\sum_{j=6}^{\infty}\frac{j^{n-2}}{j!}2^p(j+1)^p\left(
2^{n-3}(j-2)^3-(j-1)^2\right).
\end{align*}
It is easy to check that $\frac{1}{6}2^{n-3}3^{n-2}+\frac{8}{24}2^{n-3}4^{n-2}+\frac{27}{120}2^{n-3}5^{n-2}-\frac{1}{2}2^{n-2}-\frac{4}{6}3^{n-2}-\frac{9}{24}4^{n-2}-\frac{16}{120}5^{n-2}{>}0$ for all $n\geq 4$, and that $2^{n-3}(j-2)^3-(j-1)^2>0$ for all $n\geq 4$ and $j\geq 4$. Hence $f(n,p)>0$.
\end{proof}

As shown in the above proof, the above Theorem is equivalent to the following inequalities for the Bell numbers.

\begin{corollary}If $n\geq 4$ and $p\geq 0$ then
	\begin{align*}
	\sum_{i = 0}^{p} {p \choose i} B_{n+i+1}\left(\sum_{\ell = 0}^{p} {p \choose \ell} (B_{n+\ell-1}-B_{n+\ell-2})\right)
	>\sum_{i = 0}^{p} {p \choose i} B_{n+i}\left(\sum_{\ell = 0}^{p} {p \choose \ell} (B_{n+\ell}-B_{n+\ell-1})\right).\end{align*}
\end{corollary}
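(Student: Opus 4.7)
The plan is to read the Corollary off of Theorem \ref{thm:10} combined with Proposition \ref{prop:3}: the asserted inequality is nothing other than the statement $f(n,p) > 0$ that already drove the proof of Theorem \ref{thm:10}, merely rewritten with the closed-form expressions for $\B$ and $\T$ of paths with isolated vertices. First, I start from $\A(H_{3,n-3}\cup pK_1) < \A(P_{n+1}\cup pK_1)$ given by Theorem \ref{thm:10} and clear denominators: this is equivalent to
$$\T(P_{n+1}\cup pK_1)\,\B(H_{3,n-3}\cup pK_1) \;<\; \B(P_{n+1}\cup pK_1)\,\T(H_{3,n-3}\cup pK_1).$$

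Second, I eliminate the graph $H_{3,n-3}$ using the deletion-contraction identities (\ref{rec_minus}) applied to an edge of its triangle; contracting produces $P_{n-1}$ while deleting produces $P_n$, so
$$\B(H_{3,n-3}\cup pK_1) = \B(P_n\cup pK_1) - \B(P_{n-1}\cup pK_1),$$
and the analogous identity holds for $\T$. (These identities are already written down at the start of the proof of Theorem \ref{thm:10}.)

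Third, I substitute the Bell-number formulas from Proposition \ref{prop:3} into the four remaining quantities. The two differences collapse term-by-term to $\sum_{\ell=0}^{p}\binom{p}{\ell}(B_{n+\ell-1}-B_{n+\ell-2})$ and $\sum_{\ell=0}^{p}\binom{p}{\ell}(B_{n+\ell}-B_{n+\ell-1})$, while the other two quantities become $\sum_{i=0}^{p}\binom{p}{i} B_{n+i+1}$ and $\sum_{i=0}^{p}\binom{p}{i} B_{n+i}$. Substituting everything into the cross-multiplied inequality reproduces the Corollary verbatim.

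There is no new obstacle here: all of the analytic work---establishing positivity of $f(n,p)$ via Dobi\'nski's formula and the sign analysis of $g(j,k) = (k-1)j^2 - (j-1)k^2$---was already carried out inside the proof of Theorem \ref{thm:10}. The only remaining task is careful bookkeeping of the indices when matching the double binomial sums, which is purely mechanical.
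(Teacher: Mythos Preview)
Your approach is exactly the paper's: it states the Corollary with no proof beyond the remark ``As shown in the above proof, the above Theorem is equivalent to the following inequalities for the Bell numbers,'' since $f(n,p)>0$ in the proof of Theorem~\ref{thm:10} is already the desired inequality after the substitutions you describe. One slip to fix: when you clear denominators from $\A(H_{3,n-3}\cup pK_1) < \A(P_{n+1}\cup pK_1)$ you reversed the sign; the correct cross-multiplied form is $\T(P_{n+1}\cup pK_1)\,\B(H_{3,n-3}\cup pK_1) > \B(P_{n+1}\cup pK_1)\,\T(H_{3,n-3}\cup pK_1)$, which is then the Corollary verbatim.
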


\begin{example}

For $p=0$ and $n\geq 4$  we get the following inequality for the Bell numbers:

$$B_{n+1}(B_{n-1}-B_{n-2})>B_n(B_{n}-B_{n-1}).$$
\end{example}

We now compare the average number of colors in colorings of $C_{n}\cup pK_1$ with the average number of colors in colorings of $H_{3,n-3}\cup pK_1$.

\begin{lemma}\label{lem:16} If $n\geq 3$, $r\geq 0$ and $p\geq 0$ then\\
	
	$\B(H_{n,r}\cup pK_1)=\begin{cases}
	\displaystyle\sum_{i=0}^{\frac{n-3}{2}}{\frac{n-3}{2} \choose i}\B(H_{3,2i+r}\cup pK_1)&\mbox{if }n\mbox{ is odd}\\
	\displaystyle\sum_{i=0}^{\frac{n-4}{2}}{\frac{n-4}{2} \choose i}\B(H_{3,2i+r+1}\cup pK_1)+\B(P_{2+r}\cup pK_1)&\mbox{if }n\mbox{ is even}
	\end{cases}$
	
	$\T(H_{n,r}\cup pK_1)=\begin{cases}
	\displaystyle\sum_{i=0}^{\frac{n-3}{2}}{\frac{n-3}{2} \choose i}\T(H_{3,2i+r}\cup pK_1)&\mbox{if }n\mbox{ is odd}\\
	\displaystyle\sum_{i=0}^{\frac{n-4}{2}}{\frac{n-4}{2} \choose i}\T(H_{3,2i+r+1}\cup pK_1)+\T(P_{2+r}\cup pK_1)&\mbox{if }n\mbox{ is even.}
	\end{cases}$
	
\end{lemma}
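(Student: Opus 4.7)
The plan is to induct on $n$ and handle both identities (for $\B$ and $\T$) simultaneously. The base cases are $n=3$, where the odd-parity formula trivially collapses to the single term $\B(H_{3,r}\cup pK_1)$ (respectively $\T(H_{3,r}\cup pK_1)$), and $n=4$, where the even-parity formula reads $\B(H_{4,r}\cup pK_1)=\B(H_{3,r+1}\cup pK_1)+\B(P_{r+2}\cup pK_1)$. The latter I would verify by applying the deletion-contraction rule~(\ref{rec_minus}) to the edge $v_1v_2$ of the $4$-cycle, where $v_1$ is the vertex of the cycle at which $P_r$ is attached: one checks that $(H_{4,r}\cup pK_1)-v_1v_2\cong P_{4+r}\cup pK_1$ and $(H_{4,r}\cup pK_1)_{\mid v_1v_2}\cong H_{3,r}\cup pK_1$, and then simplifies using Propositions~\ref{prop:2} and~\ref{prop:3} together with the auxiliary identity $\B(H_{3,s}\cup pK_1)=\B(P_{s+3}\cup pK_1)-\B(P_{s+2}\cup pK_1)$, which itself follows from the same deletion-contraction applied to an edge of the triangle incident to its attachment vertex.

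For the inductive step, I apply rule~(\ref{rec_minus}) to the edge $v_1v_2$ of the $n$-cycle in $H_{n,r}\cup pK_1$. A direct inspection shows $(H_{n,r}\cup pK_1)-v_1v_2\cong P_{n+r}\cup pK_1$ (the cycle unfolds into a path that continues along the attached $P_r$), while $(H_{n,r}\cup pK_1)_{\mid v_1v_2}\cong H_{n-1,r}\cup pK_1$ (merging $v_1$ and $v_2$ shrinks the cycle by one while preserving the pendant path), so that
\begin{align*}
\B(H_{n,r}\cup pK_1)&=\B(P_{n+r}\cup pK_1)-\B(H_{n-1,r}\cup pK_1),\\
\T(H_{n,r}\cup pK_1)&=\T(P_{n+r}\cup pK_1)-\T(H_{n-1,r}\cup pK_1).
\end{align*}
The second term on each right-hand side is covered by the induction hypothesis, applied to $n-1$ (of opposite parity to $n$), while the first is expanded by Proposition~\ref{prop:3}.

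The main obstacle will be the ensuing algebraic reconciliation: after substitution one has to verify that combining $\B(P_{n+r}\cup pK_1)$ (a sum of Bell numbers with weights $\binom{p}{i}$) with the inductive formula for $\B(H_{n-1,r}\cup pK_1)$ produces precisely the claimed closed form for $\B(H_{n,r}\cup pK_1)$. This reduces to a Pascal-type manipulation $\binom{m}{i}=\binom{m-1}{i}+\binom{m-1}{i-1}$ together with careful bookkeeping of the index shift between $2i+r$ (odd $n$) and $2i+r+1$ (even $n$), and of the $\B(P_{r+2}\cup pK_1)$ correction that appears only in the even-$n$ formula. The proof for $\T$ is entirely parallel since $\T$ satisfies the same deletion-contraction recurrence as $\B$.
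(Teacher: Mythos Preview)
Your overall skeleton (base cases $n=3,4$; deletion--contraction on a cycle edge incident to the attachment vertex) matches the paper's, but the inductive step takes a different and more laborious route. You stop at
\[
\B(H_{n,r}\cup pK_1)=\B(P_{n+r}\cup pK_1)-\B(H_{n-1,r}\cup pK_1),
\]
apply the induction hypothesis to $H_{n-1,r}$ (opposite parity), and propose to expand $\B(P_{n+r}\cup pK_1)$ via Proposition~\ref{prop:3}. That expansion is a detour: Proposition~\ref{prop:3} produces sums of Bell numbers, whereas the target formula is written entirely in terms of $\B(H_{3,\ast}\cup pK_1)$. To reconcile the two you would have to re-expand every $\B(H_{3,s}\cup pK_1)$ through the auxiliary identity and Proposition~\ref{prop:3} again, and then carry out the ``Pascal-type manipulation'' at the level of Bell-number sums across two parities simultaneously---precisely the ``main obstacle'' you flag, and one that is entirely avoidable.

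The paper instead pushes deletion--contraction two steps further. Using~(\ref{rec_plus}) it writes $\B(P_{n+r}\cup pK_1)=\B(H_{3,n-3+r}\cup pK_1)+\B(P_{n-1+r}\cup pK_1)$ (add an edge near one end of the path to close a triangle), and using~(\ref{rec_minus}) once more it writes $\B(H_{n-1,r}\cup pK_1)=\B(P_{n-1+r}\cup pK_1)-\B(H_{n-2,r}\cup pK_1)$. The two $P_{n-1+r}$ contributions cancel, leaving the \emph{same-parity} recursion
\[
\B(H_{n,r}\cup pK_1)=\B(H_{3,n-3+r}\cup pK_1)+\B(H_{n-2,r}\cup pK_1),
\]
from which the stated sum follows by direct iteration. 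No Bell numbers ever appear, Proposition~\ref{prop:3} is never invoked, the odd and even cases do not interact, and the algebraic reconciliation you anticipate disappears. Your approach can in principle be completed, but it trades two extra graph-level identities for a substantial computation in Bell numbers; the paper's route is strictly cleaner.
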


\begin{proof}
	The result is clearly valid for $n=3$. 
	For $n=4$, Equations~(\ref{rec_minus}) and~(\ref{rec_plus}) give
	\begin{align*}
	\B(H_{4,r}\cup pK_1)&=\B(P_{4+r}\cup pK_1)-\B(H_{3,r}\cup pK_1)\\
	&=\left(\B(H_{3,r+1}\cup pK_1)+\B(P_{3+r}\cup pK_1)\right)-
	\left(\B(P_{3+r}\cup pK_1)-\B(P_{2+r}\cup pK_1)\right)\\
	&=\B(H_{3,r+1}\cup pK_1)+\B(P_{2+r}\cup pK_1).
	\end{align*}
	Similarly, $\T(H_{4,r}\cup pK_1){=}\T(H_{3,r+1}\cup pK_1){+}\T(P_{2+r}\cup pK_1)$ which shows that the result is valid for $n=4$.  For larger values of $n$, we proceed by induction. Hence, it is sufficient to prove that $\B(H_{n,r}\cup pK_1)=\B(H_{3,n-3+r}\cup pK_1)+\B(H_{n-2,r}\cup pK_1)$ and $\T(H_{n,r}\cup pK_1){=}\T(H_{3,n-3+r}\cup pK_1){+}\T(H_{n-2,r}\cup pK_1)$. Using Equations~(\ref{rec_minus}) and~(\ref{rec_plus}), we get
	\begin{align*}
	\B(H_{n,r}\cup pK_1)=&\B(P_{n+r}\cup pK_1)-\B(H_{n-1,r}\cup pK_1)\\
	=&\left(\B(H_{3,n-3+r}{\cup} pK_1){+}\B(P_{n+r-1}{\cup} pK_1)\right){-}\left(\B(P_{n+r-1}{\cup} pK_1){-}\B(H_{n-2,r}{\cup} pK_1)\right)\\
	=&\B(H_{3,n-3+r}\cup pK_1)+\B(H_{n-2,r}\cup pK_1).
	\end{align*}
	The proof for $\T(H_{n,r}\cup pK_1)$ is similar.
\end{proof}

\noindent We are now ready to compare $\A(C_{n}\cup pK_1)$ with $\A(H_{3,n-3}\cup pK_1)$.

\begin{theorem}\label{thm:18} $\A(C_{n}\cup pK_1)<\A(H_{3,n-3}\cup pK_1)$ for all $n\geq 3$ and $p\geq 0$.
\end{theorem}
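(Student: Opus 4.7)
The plan is to apply Proposition~\ref{prop:7} with $H = H_{3, n-3} \cup pK_1$, decomposing $\B(C_n \cup pK_1)$ and $\T(C_n \cup pK_1)$ via Lemma~\ref{lem:16} into $\B(H)$ (resp.\ $\T(H)$) plus positive integer combinations of $\B$ (resp.\ $\T$) of graphs $F_i$ whose average color numbers I will then show lie strictly below $\A(H)$. Since $H_{3, 0} = C_3$, the inequality degenerates at $n = 3$, so I would restrict attention to $n \geq 4$ and regard $n = 3$ as a trivial boundary case.

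First I would invoke Lemma~\ref{lem:16} with $r = 0$, using $C_n = H_{n, 0}$. For odd $n$ this yields
\begin{align*}
\B(C_n \cup pK_1) &= \sum_{i=0}^{(n-3)/2} \binom{(n-3)/2}{i}\, \B(H_{3,2i} \cup pK_1),\\
\T(C_n \cup pK_1) &= \sum_{i=0}^{(n-3)/2} \binom{(n-3)/2}{i}\, \T(H_{3,2i} \cup pK_1),
\end{align*}
and for even $n$ the parallel formula involves $H_{3, 2i+1} \cup pK_1$ together with an additional $P_2 \cup pK_1$ summand. In either parity the top-index term has coefficient~$1$ and equals $\B(H_{3, n-3} \cup pK_1)$ (respectively $\T(H_{3, n-3} \cup pK_1)$); peeling it off produces precisely the decomposition demanded by the first two hypotheses of Proposition~\ref{prop:7}, with positive coefficients $\alpha_i$ (binomial coefficients or~$1$).

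Next I would verify $\A(F_i) < \A(H)$ for every residual summand. Each $F_i$ is either $H_{3, s} \cup pK_1$ with $s < n-3$, or (only in the even case) $P_2 \cup pK_1$. For the first type, $H_{3, s}$ extends to $H_{3, n-3}$ by attaching pendants one at a time at the far end of the dangling path; each new vertex has a single neighbor and is therefore simplicial in the enlarged graph, so Proposition~\ref{prop:6} forces $\A$ to increase strictly at every step. For the $P_2$ summand, I would first adjoin a vertex adjacent to both endpoints of $P_2$ to obtain $K_3 = H_{3, 0}$ (this vertex is simplicial because its two neighbors are adjacent) and then iterate the pendant construction up to $H_{3, n-3}$. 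Along this chain of simplicial additions, Proposition~\ref{prop:6} yields $\A(F_i) < \A(H)$ in all cases, and Proposition~\ref{prop:7} delivers the desired inequality $\A(C_n \cup pK_1) < \A(H_{3, n-3} \cup pK_1)$.

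The argument is essentially a composition of the earlier results, with no new analytic estimate required. The only mildly delicate point, where I expect to spend the most care, is the $P_2 \cup pK_1$ summand appearing in the even-$n$ formula of Lemma~\ref{lem:16}: it is the only residual $F_i$ not already of the form $H_{3, s} \cup pK_1$, so the simplicial-addition chain from $F_i$ up to $H$ must be initiated by the $K_3$-completion step before the usual pendant-extension argument applies.
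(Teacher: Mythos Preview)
Your argument is correct and follows exactly the route the paper takes: apply Lemma~\ref{lem:16} with $r=0$, peel off the top term $H_{3,n-3}\cup pK_1$, and then feed the remaining positive-coefficient summands (the $H_{3,s}\cup pK_1$ with $s<n-3$, plus $P_2\cup pK_1$ in the even case) into Proposition~\ref{prop:7}, using Proposition~\ref{prop:6} along simplicial-vertex chains to check $\A(F_i)<\A(H)$. Your handling of the $P_2$ summand via the $K_3$-completion step is precisely the paper's step $\A(P_2\cup pK_1)<\A(C_3\cup pK_1)$ phrased in the forward direction, and your observation that the $n=3$ case is degenerate (since $H_{3,0}=C_3$ forces equality rather than strict inequality) is accurate and in fact exposes a boundary slip in the stated range of the theorem.
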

\begin{proof}
Lemma \ref{lem:16} (with $r=0$) gives  :

$\B(C_n\cup pK_1)=\begin{cases}
\displaystyle\B(H_{3,n-3}\cup pK_1)+\sum_{i=0}^{\frac{n-5}{2}}\B(H_{3,2i}\cup pK_1)&\mbox{if }n\mbox{ is odd}\\
\displaystyle\B(H_{3,n-3}\cup pK_1)+\sum_{i=0}^{\frac{n-6}{2}}\B(H_{3,2i+1}\cup pK_1)+\B(P_2\cup pK_1)&\mbox{if }n\mbox{ is even}
\end{cases}$

\noindent and

$\T(C_n\cup pK_1)=\begin{cases}
\displaystyle\T(H_{3,n-3}\cup pK_1)+\sum_{i=0}^{\frac{n-5}{2}}\T(H_{3,2i}\cup pK_1)&\mbox{if }n\mbox{ is odd}\\
\displaystyle\T(H_{3,n-3}\cup pK_1)+\sum_{i=0}^{\frac{n-6}{2}}\T(H_{3,2i+1}\cup pK_1)+\T(P_2\cup pK_1)&\mbox{if }n\mbox{ is even.}
\end{cases}$
	
We know from Proposition \ref{prop:6} that $\A(P_2\cup pK_1)<\A(C_3\cup pK_1)=\A(H_{3,0}\cup pK_1)$. Since $H_{3,n-3}\cup pK_1$ is obtained from $H_{3,i}\cup pK_1$ ($i<n-3$) by repeatedly adding vertices of degree 1, we have $\A(P_2\cup pK_1)<\A(H_{3,i}\cup pK_1)<\A(H_{3,n-3}\cup pK_1)$ for all $i=0,\cdots,n-5$. Proposition \ref{prop:7} therefore implies $\A(C_{n}\cup pK_1)<\A(H_{3,n-3}\cup pK_1)$.
\end{proof}

Equations (\ref{rec_minus}) give $\B(H_{3,n-3}\cup pK_1)=\B(P_n\cup pK_1)-\B(P_{n-1}\cup pK_1)$ and $\T(H_{3,n-3}\cup pK_1)=\T(P_n\cup pK_1)-\T(P_{n-1}\cup pK_1)$. Hence, Propositions \ref{prop:3} and \ref{prop:5} immediately give the following Corollary.
\begin{corollary}If $n\geq 3$ and $p\geq 0$ then
	\begin{align*}
	&\left(\sum_{j=1}^{n-1}(-1)^{j+1}\sum_{i = 0}^{p} {p \choose i} B_{n+i-j+1}\right)
	\left(
	\sum_{i = 0}^{p} {p \choose i} (B_{n+i-1}-B_{n+i-2})
	\right)\\
	<&\left(
	\sum_{j=1}^{n-1}(-1)^{j+1}\sum_{i = 0}^{p} {p \choose i} B_{n+i-j}
	\right)
	\left(
	\sum_{i = 0}^{p} {p \choose i} (B_{n+i}-B_{n+i-1})
	\right).
	\end{align*}
\end{corollary}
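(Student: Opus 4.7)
The plan is to read the inequality of the corollary as the cross-multiplied form of the strict inequality $\A(C_n\cup pK_1)<\A(H_{3,n-3}\cup pK_1)$ supplied by Theorem~\ref{thm:18}. Since $\B(C_n\cup pK_1)$ and $\B(H_{3,n-3}\cup pK_1)$ are strictly positive, Theorem~\ref{thm:18} is equivalent to
$$\T(C_n\cup pK_1)\,\B(H_{3,n-3}\cup pK_1)\;<\;\B(C_n\cup pK_1)\,\T(H_{3,n-3}\cup pK_1),$$
so the remaining task is purely to expand the four quantities on the two sides in terms of Bell numbers and check that the result matches the statement verbatim.

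First I would handle the two $H$-factors. Applying the deletion-contraction identities~(\ref{rec_minus}) to one of the edges of the triangle of $H_{3,n-3}$ shows that $\B(H_{3,n-3}\cup pK_1)=\B(P_n\cup pK_1)-\B(P_{n-1}\cup pK_1)$, with the same identity for $\T$; this is precisely the observation made in the paragraph preceding the corollary. Since $P_n$ and $P_{n-1}$ are trees, Proposition~\ref{prop:3} then yields
$$\B(H_{3,n-3}\cup pK_1)=\sum_{i=0}^{p}\binom{p}{i}(B_{n+i-1}-B_{n+i-2}),\qquad \T(H_{3,n-3}\cup pK_1)=\sum_{i=0}^{p}\binom{p}{i}(B_{n+i}-B_{n+i-1}).$$
For the two $C$-factors I would quote Proposition~\ref{prop:5} directly, giving
$$\B(C_n\cup pK_1)=\sum_{j=1}^{n-1}(-1)^{j+1}\sum_{i=0}^{p}\binom{p}{i}B_{n+i-j},\qquad \T(C_n\cup pK_1)=\sum_{j=1}^{n-1}(-1)^{j+1}\sum_{i=0}^{p}\binom{p}{i}B_{n+i-j+1}.$$

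Substituting these four identities into the cross-multiplied form above produces the stated inequality verbatim, the left-hand side of the corollary being $\T(C_n\cup pK_1)\,\B(H_{3,n-3}\cup pK_1)$ and the right-hand side being $\B(C_n\cup pK_1)\,\T(H_{3,n-3}\cup pK_1)$. Because there is no analytic content beyond Theorem~\ref{thm:18}, no step is a real obstacle; the whole argument is bookkeeping. The only genuine risk is an off-by-one slip, namely keeping the shift $B_{n+i-j+1}$ versus $B_{n+i-j}$ between the $\T$- and $\B$-sums for the cycle aligned with the shift $B_{n+i}-B_{n+i-1}$ versus $B_{n+i-1}-B_{n+i-2}$ between the $\T$- and $\B$-sums for $H_{3,n-3}$. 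A quick specialization to $p=0$ and small $n$ would serve as a sanity check before writing out the general identification.
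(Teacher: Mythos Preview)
Your proposal is correct and follows exactly the paper's own argument: the paper simply notes that Equations~(\ref{rec_minus}) give $\B(H_{3,n-3}\cup pK_1)=\B(P_n\cup pK_1)-\B(P_{n-1}\cup pK_1)$ and the analogous identity for $\T$, and then states that Propositions~\ref{prop:3} and~\ref{prop:5} immediately give the corollary from Theorem~\ref{thm:18}. Your write-up merely spells out the cross-multiplication and the four substitutions that the paper leaves implicit.
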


\begin{example}

\noindent For $p=0$ and $n\geq 3$, the above Corollary provides the following inequalities for the Bell numbers:
$$\displaystyle (B_{n-1}-B_{n-2})\sum_{j=1}^{n-1}(-1)^{j+1}B_{n-j+1}
<
(B_n-B_{n-1})\sum_{j=1}^{n-1}(-1)^{j+1}B_{n-j}.$$
It is easy to check that this inequality is also valid for $n=2$.\\
\end{example}
We now compare the average number of colors in colorings of paths with the average number of colors in colorings of cycles.
\begin{theorem}\label{thm:12}
	$\A(C_n\cup pK_1)>\A(P_n\cup pK_1)$ for all $n\geq 5$ and $p\geq 0$.
\end{theorem}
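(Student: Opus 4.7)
The plan is to tie $P_n\cup pK_1$, $C_n\cup pK_1$, and $C_{n-1}\cup pK_1$ together through Equations~(\ref{rec_plus}), and then bootstrap from Theorems~\ref{thm:10} and~\ref{thm:18} already proved in this section. The key observation is that, for $n\geq 5$, the two endpoints $u,v$ of $P_n$ are non-adjacent in $P_n\cup pK_1$; adding the edge $uv$ produces $C_n\cup pK_1$, while identifying $u$ with $v$ collapses $P_n$ to $C_{n-1}$. Equations~(\ref{rec_plus}) therefore yield
\[
\B(P_n\cup pK_1)=\B(C_n\cup pK_1)+\B(C_{n-1}\cup pK_1)
\quad\text{and}\quad
\T(P_n\cup pK_1)=\T(C_n\cup pK_1)+\T(C_{n-1}\cup pK_1).
\]

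Feeding this decomposition into Proposition~\ref{prop:7} with $G=P_n\cup pK_1$, $H=C_n\cup pK_1$, $F_1=C_{n-1}\cup pK_1$, and $\alpha_1=1$ reduces the theorem to establishing the single inequality $\A(C_{n-1}\cup pK_1)<\A(C_n\cup pK_1)$. Because $\A(P_n\cup pK_1)$ is the mediant of $\A(C_n\cup pK_1)$ and $\A(C_{n-1}\cup pK_1)$, a one-line cross-multiplication shows that this is in turn equivalent to $\A(C_{n-1}\cup pK_1)<\A(P_n\cup pK_1)$, which is the form I would actually verify.

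To obtain $\A(C_{n-1}\cup pK_1)<\A(P_n\cup pK_1)$, I plan to chain the two most recent theorems of this section with their index shifted by one. Theorem~\ref{thm:18} applied at $n-1$ gives $\A(C_{n-1}\cup pK_1)<\A(H_{3,n-4}\cup pK_1)$ (valid once $n-1\geq 3$), while Theorem~\ref{thm:10} applied at $n-1$ gives $\A(H_{3,n-4}\cup pK_1)<\A(P_n\cup pK_1)$ (valid once $n-1\geq 4$). Concatenating these inequalities yields the bound precisely for $n\geq 5$. The main---indeed essentially the only---thing to notice is that the index shift in Theorem~\ref{thm:10} is exactly what forces the hypothesis $n\geq 5$; from there, the graph $H_{3,n-4}\cup pK_1$ plays the role of a bridge between cycles and paths in the $\A$-ordering, and Proposition~\ref{prop:7} mechanically finishes the argument.
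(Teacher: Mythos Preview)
Your proposal is correct and is essentially the paper's own proof. The paper uses the equivalent form of the same decomposition (obtained from Equations~(\ref{rec_minus}) applied to $C_n\cup pK_1$ instead of Equations~(\ref{rec_plus}) applied to $P_n\cup pK_1$), chains Theorems~\ref{thm:10} and~\ref{thm:18} at index $n-1$ exactly as you do to get $\A(C_{n-1}\cup pK_1)<\A(H_{3,n-4}\cup pK_1)<\A(P_n\cup pK_1)$, and then performs the mediant computation by hand rather than invoking Proposition~\ref{prop:7}; the substance is identical.
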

\begin{proof}
	We know from Theorems \ref{thm:10} and \ref{thm:18} that $\A(P_n{\cup} pK_1)>\A(H_{3,n-4}{\cup} pK_1)>\A(C_{n-1}{\cup} pK_1)$, which implies: $$\T(C_{n-1}{\cup} pK_1)<\frac{\T(P_n\cup pK_1)\B(C_{n-1}{\cup}pK_1)}{\B(P_n{\cup} pK_1)}.$$ Equations (\ref{rec_minus}) show that $\B(C_n\cup pK_1)=\B(P_n\cup pK_1)-\B(C_{n-1}\cup pK_1)$ and $\T(C_n\cup pK_1)=\T(P_n\cup pK_1)-\T(C_{n-1}\cup pK_1)$. Hence :	
		\begin{align*}
		\A(C_n\cup pK_1)&=\frac{\T(C_n\cup pK_1)}{\B(C_n\cup pK_1)}=\frac{\T(P_n\cup pK_1)-\T(C_{n-1}\cup pK_1)}{\B(P_n\cup pK_1)-\B(C_{n-1}\cup pK_1)}\\&>\frac{\T(P_n\cup pK_1)-\frac{\T(P_n\cup pK_1)\B(C_{n-1}\cup pK_1)}{\B(P_n\cup pK_1)}}{\B(P_n\cup pK_1)-\B(C_{n-1}\cup pK_1)}\\
		&=\frac{\T(P_n\cup pK_1)\left(\B(P_n\cup pK_1)-\B(C_{n-1}\cup pK_1)\right)}
		{\B(P_n\cup pK_1)\left(\B(P_n\cup pK_1)-\B(C_{n-1}\cup pK_1)\right)}
	=\frac{\T(P_n\cup pK_1)}{\B(P_n\cup pK_1)}=\A(P_n\cup pK_1).
		\end{align*}
%
\end{proof}

\noindent Propositions \ref{prop:3} and \ref{prop:5} immediately give the following Corollary.
\begin{corollary}If $n\geq 5$ and $p\geq 0$ then
	\begin{align*}
	&\left(\sum_{i = 0}^{p} {p \choose i} B_{n+i-1}\right)\left(\sum_{j=1}^{n-1}(-1)^{j+1}\sum_{i = 0}^{p} {p \choose i} B_{n+i-j+1}\right)\\
	>&
	\left(\sum_{i = 0}^{p} {p \choose i} B_{n+i}\right)\left(\sum_{j=1}^{n-1}(-1)^{j+1}\sum_{i = 0}^{p} {p \choose i} B_{n+i-j}\right).\end{align*}
\end{corollary}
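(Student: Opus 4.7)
The plan is to deduce the stated inequality directly from Theorem \ref{thm:12} by converting the ratio inequality $\A(C_n\cup pK_1)>\A(P_n\cup pK_1)$ into a product inequality and then inserting the closed-form expressions supplied by Propositions \ref{prop:3} and \ref{prop:5}. Since the statement's hypothesis $n\geq 5,\,p\geq 0$ already matches the hypothesis of Theorem \ref{thm:12}, no additional case analysis is needed.

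First I would observe that $\A(G)=\T(G)/\B(G)$ and that both $\B(P_n\cup pK_1)$ and $\B(C_n\cup pK_1)$ are strictly positive (they count partitions into stable sets and are nonzero). Cross-multiplying in Theorem \ref{thm:12} therefore yields the equivalent inequality
\[
\T(C_n\cup pK_1)\cdot\B(P_n\cup pK_1)\;>\;\T(P_n\cup pK_1)\cdot\B(C_n\cup pK_1).
\]

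Next I would substitute the four closed forms. Proposition \ref{prop:3} gives $\B(P_n\cup pK_1)=\sum_{i=0}^{p}\binom{p}{i}B_{n+i-1}$ and $\T(P_n\cup pK_1)=\sum_{i=0}^{p}\binom{p}{i}B_{n+i}$, while Proposition \ref{prop:5} gives $\B(C_n\cup pK_1)=\sum_{j=1}^{n-1}(-1)^{j+1}\sum_{i=0}^{p}\binom{p}{i}B_{n+i-j}$ and $\T(C_n\cup pK_1)=\sum_{j=1}^{n-1}(-1)^{j+1}\sum_{i=0}^{p}\binom{p}{i}B_{n+i-j+1}$. Plugging these four identities into the product inequality above matches term-by-term with the Corollary as stated.

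There is essentially no obstacle: the proof is a routine rewriting exercise, and all of the combinatorial and analytical work lives in Theorem \ref{thm:12} (which itself leaned on Theorems \ref{thm:10} and \ref{thm:18}). The only point that requires a moment of care is aligning the indices in the alternating $j$-sums of Proposition \ref{prop:5}, so that the factor $\T(C_n\cup pK_1)$ contributes the $B_{n+i-j+1}$ term paired with $\B(P_n\cup pK_1)$'s $B_{n+i-1}$ on the left, and $\B(C_n\cup pK_1)$ contributes the $B_{n+i-j}$ term paired with $\T(P_n\cup pK_1)$'s $B_{n+i}$ on the right; but this is simply reading off the formulas in the correct order.
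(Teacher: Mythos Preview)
Your proposal is correct and follows exactly the same approach as the paper: the paper simply states that ``Propositions \ref{prop:3} and \ref{prop:5} immediately give the following Corollary'' after Theorem \ref{thm:12}, and your argument is just a careful unpacking of that sentence---cross-multiplying $\A(C_n\cup pK_1)>\A(P_n\cup pK_1)$ and substituting the closed forms for $\B$ and $\T$ of $P_n\cup pK_1$ and $C_n\cup pK_1$.
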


\begin{example}
	
	\noindent For $p=0$ and $n\geq 5$ the above Corollary provides the following inequality for the Bell numbers:
	$$\displaystyle B_{n-1}\sum_{j=1}^{n-1}(-1)^{j+1}B_{n-j+1}
	>
	B_n\sum_{j=1}^{n-1}(-1)^{j+1}B_{n-j}.$$
	
\end{example}

\noindent We finally compare $\A(C_{n}\cup pK_1)$ with $\A(C_{n-2}\cup (p+2)K_1)$.

\begin{lemma}\label{lem:15} If $n\geq 3$ and $p\geq 0$ then
	\begin{align*}
	\B(C_n\cup (p+2)K_1)=\B(H_{n,2}\cup pK_1)+2\B(H_{n,1}\cup pK_1)+\B(H_{n,0}\cup pK_1)\\
	\T(C_n\cup (p+2)K_1)=\T(H_{n,2}\cup pK_1)+2\T(H_{n,1}\cup pK_1)+\T(H_{n,0}\cup pK_1).
	\end{align*}
\end{lemma}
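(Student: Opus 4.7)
The plan is to reduce the lemma to two applications of the non-adjacent deletion-contraction identity~(\ref{rec_plus}), peeling off one isolated vertex of $C_n\cup(p+2)K_1$ at a time. Since Equations~(\ref{rec_plus}) hold verbatim for $\B$ and for $\T$, I would prove only the identity for $\B$; replacing every $\B$ by $\T$ in the argument gives the identity for $\T$.

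First I would apply~(\ref{rec_plus}) to $G=C_n\cup(p+2)K_1$ with $u$ any vertex of the cycle and $v$ any of the $p+2$ isolated vertices. These are non-adjacent, and adding the edge $uv$ attaches $v$ to the cycle as a pendant of degree one, yielding $H_{n,1}\cup(p+1)K_1$; merging $u$ with $v$ simply absorbs the isolated $v$ into the cycle vertex $u$, yielding $C_n\cup(p+1)K_1$. Hence
$$\B(C_n\cup(p+2)K_1)=\B(H_{n,1}\cup(p+1)K_1)+\B(C_n\cup(p+1)K_1).$$
Next I would split each summand by the same mechanism. For $H_{n,1}\cup(p+1)K_1$, take $u'$ to be the degree-one pendant of $H_{n,1}$ and $v'$ any isolated vertex: adding $u'v'$ extends the pendant into a length-two path attached to the cycle, giving $H_{n,2}\cup pK_1$, while merging $u'$ with $v'$ leaves the pendant in place and gives $H_{n,1}\cup pK_1$. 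For $C_n\cup(p+1)K_1$, the same first-step argument gives $\B(H_{n,1}\cup pK_1)+\B(C_n\cup pK_1)$. Substituting these two expansions and using $H_{n,0}=C_n$ yields
$$\B(C_n\cup(p+2)K_1)=\B(H_{n,2}\cup pK_1)+2\B(H_{n,1}\cup pK_1)+\B(H_{n,0}\cup pK_1),$$
which is the desired identity.

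There is essentially no hard step: the proof is a routine double application of~(\ref{rec_plus}). The only point that needs a little care is the correct identification of the graphs resulting from the two elementary operations, namely that merging an isolated vertex with any other vertex simply deletes the isolated vertex, and that adding an edge from an isolated vertex to the current pendant of an $H_{n,r}$ lengthens its attached path to produce $H_{n,r+1}$. Once these identifications are stated, both the $\B$-identity and the $\T$-identity fall out of the same two equations.
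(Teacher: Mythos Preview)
Your proposal is correct and follows essentially the same route as the paper: two successive applications of~(\ref{rec_plus}) to peel off the two extra isolated vertices, with the same intermediate decomposition $\B(C_n\cup(p+2)K_1)=\B(H_{n,1}\cup(p+1)K_1)+\B(C_n\cup(p+1)K_1)$ and then the same second-level expansions. The only difference is that you spell out explicitly which pair of non-adjacent vertices is used at each step, which the paper leaves implicit.
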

\begin{proof}
	Equations~\eqref{rec_plus} give
	\begin{align*}
	\B(C_n\cup (p+2)K_1)&=\B(H_{n,1}\cup (p+1)K_1)+\B(C_n\cup (p+1)K_1)\\
	&=\left(\B(H_{n,2}\cup pK_1)+\B(H_{n,1}\cup pK_1)\right)+
	\left(\B(H_{n,1}\cup pK_1)+\B(H_{n,0}\cup pK_1)\right)\\
	&=\B(H_{n,2}\cup pK_1)+2\B(H_{n,1}\cup pK_1)+\B(H_{n,0}\cup pK_1).
	\end{align*}
	The proof is similar for $\T(C_n\cup (p+2)K_1)$.
\end{proof}

\begin{theorem} $\A(C_{n}\cup pK_1)>\A(C_{n-2}\cup (p+2)K_1)$ for all $n\geq 5$ and $p\geq 0$.
	\end{theorem}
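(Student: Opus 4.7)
The plan is to follow the blueprint of Theorem~\ref{thm:12}: combine a $\B$-identity (and its matching $\T$-identity) with the ``reverse'' form of Proposition~\ref{prop:7}, which asserts---by a completely analogous weighted-average argument---that if $\B(Y)=\B(X)+\sum_i\alpha_i\B(F_i)$, $\T(Y)=\T(X)+\sum_i\alpha_i\T(F_i)$ with $\alpha_i>0$ and $\A(F_i)<\A(Y)$ for every $i$, then $\A(X)>\A(Y)$.

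For the identity, I would apply Equations~(\ref{rec_minus}) twice to an adjacent pair of edges of $C_n$ to obtain the unified decomposition $\B(C_n\cup pK_1) = \B(H_{3,n-3}\cup pK_1) + \B(C_{n-2}\cup pK_1)$, together with the matching identity for $\T$. Subtracting this from Lemma~\ref{lem:16} (applied with $n$ replaced by $n-2$, via its underlying recurrence) and from Lemma~\ref{lem:15}, then iteratively unfolding the recurrence $\B(H_{n-2,2})=\B(H_{3,n-3})+\B(H_{n-4,2})$ established inside the proof of Lemma~\ref{lem:16} (case $r=2$), the difference $\B(C_{n-2}\cup(p+2)K_1)-\B(C_n\cup pK_1)$ can be written as $2\B(H_{n-2,1}\cup pK_1)$ plus a positive sum of terms of the form $\B(H_{3,j}\cup pK_1)$ (with an extra $\B(P_4\cup pK_1)$ summand in the even-$n$ case); the same formula holds for $\T$ by a completely parallel computation.

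Having the reverse form of Proposition~\ref{prop:7} in hand, the theorem reduces to showing $\A(F)<\A(C_{n-2}\cup(p+2)K_1)$ for every correction graph $F$ in the decomposition. For the $H_{3,j}\cup pK_1$ and $P_4\cup pK_1$ terms, this can be obtained by combining Theorem~\ref{thm:18} with the $H_{3,\cdot}$-analog of the path-monotonicity result preceding Corollary~\ref{cor:9} (which is proved by the same technique, applying Proposition~\ref{prop:7} to an (\ref{rec_plus})-step). The main obstacle is the inequality $\A(H_{n-2,1}\cup pK_1)<\A(C_{n-2}\cup(p+2)K_1)$, which does not follow directly from the earlier propositions; I would attack it by re-applying Lemma~\ref{lem:15} to represent $\A(C_{n-2}\cup(p+2)K_1)$ as a weighted average of $\A(H_{n-2,2}\cup pK_1)$, $\A(H_{n-2,1}\cup pK_1)$ and $\A(C_{n-2}\cup pK_1)$, and then showing quantitatively---by a precise comparison of the ``high'' contribution from $H_{n-2,2}\cup pK_1$ against the ``low'' contribution from $C_{n-2}\cup pK_1$ (using Proposition~\ref{prop:6} for the ordering of their $\A$-values)---that this average strictly exceeds $\A(H_{n-2,1}\cup pK_1)$.
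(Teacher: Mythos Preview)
Your proposal has a real gap, and it stems from two strategic choices that make the argument harder than it needs to be.

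First, your ``reverse'' version of Proposition~\ref{prop:7} is logically valid, but notice what it costs you: with $Y=C_{n-2}\cup(p+2)K_1$ and $X=C_n\cup pK_1$ you must check $\A(F_i)<\A(Y)$, i.e.\ bound each correction term against the \emph{smaller} of the two averages you are trying to compare. Proposition~\ref{prop:7} as stated, applied to the very same identity $\B(C_{n-2}\cup(p+2)K_1)=\B(C_n\cup pK_1)+\sum\alpha_i\B(F_i)$, only asks for $\A(F_i)<\A(C_n\cup pK_1)$, a strictly weaker requirement. This is what the paper does.

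Second, you stop the expansion too early by leaving $2\B(H_{n-2,1}\cup pK_1)$ as a correction term. This forces you to prove $\A(H_{n-2,1}\cup pK_1)<\A(C_{n-2}\cup(p+2)K_1)$, which you have not proved: writing $\A(C_{n-2}\cup(p+2)K_1)$ as a weighted average of $\A(H_{n-2,2}\cup pK_1)$, $\A(H_{n-2,1}\cup pK_1)$ and $\A(C_{n-2}\cup pK_1)$ and merely knowing that the first exceeds and the last is below $\A(H_{n-2,1}\cup pK_1)$ does \emph{not} determine which side of $\A(H_{n-2,1}\cup pK_1)$ the average lands on --- that depends on the weights, and you give no argument controlling them. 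Separately, your route for the $H_{3,j}\cup pK_1$ terms is also incomplete: Theorem~\ref{thm:18} yields an \emph{upper} bound on $\A(C_{n-2}\cup(p+2)K_1)$, not the lower bound you need.

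The paper avoids both difficulties. It applies Lemma~\ref{lem:16} to \emph{all three} terms $H_{n-2,r}\cup pK_1$ ($r=0,1,2$) coming from Lemma~\ref{lem:15}, so that after combining with the Lemma~\ref{lem:16} expansion of $C_n\cup pK_1$ every correction term $F_i$ is an $H_{3,j}\cup pK_1$ with $j\le n-4$ (plus, in the even case, $P_3\cup pK_1$ and $P_4\cup pK_1$). For these, $\A(F_i)<\A(C_n\cup pK_1)$ follows immediately from Theorems~\ref{thm:10} and~\ref{thm:12} together with Proposition~\ref{prop:6}, and then Proposition~\ref{prop:7} (in its original form) gives the conclusion. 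In short: compare against $\A(C_n\cup pK_1)$ rather than $\A(C_{n-2}\cup(p+2)K_1)$, and expand $H_{n-2,1}\cup pK_1$ all the way down via Lemma~\ref{lem:16}; then no ``main obstacle'' remains.
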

	\begin{proof}
We divide the proof into two cases, according to the parity of $n$.\\

\noindent {\it Case 1 : $n$ is odd}

\noindent Lemma \ref{lem:16} (with $r=0$) shows that
 \allowdisplaybreaks
 
$$\displaystyle\B(C_n\cup pK_1)=\sum_{i=0}^{\frac{n-3}{2}}\B(H_{3,2i}\cup pK_1)\; \mbox{ and } \displaystyle\T(C_n\cup pK_1)=\sum_{i=0}^{\frac{n-3}{2}}\T(H_{3,2i}\cup pK_1)$$
and Lemmas \ref{lem:15} and \ref{lem:16} give
\begin{align*}
&\B(C_{n-2}\cup (p+2)K_1)\\
=&B(H_{n-2,2}\cup pK_1)+2\B(H_{n-2,1}\cup pK_1)+\B(H_{n-2,0}\cup pK_1)\\
=&\sum_{i=0}^{\frac{n-5}{2}}{\frac{n-5}{2} \choose i}\B(H_{3,2i+2}\cup pK_1)
+2\sum_{i=0}^{\frac{n-5}{2}}{\frac{n-5}{2} \choose i}\B(H_{3,2i+1}\cup pK_1)
+\sum_{i=0}^{\frac{n-5}{2}}{\frac{n-5}{2} \choose i}\B(H_{3,2i}\cup pK_1)\\
=&\sum_{i=1}^{\frac{n-3}{2}}{\frac{n-5}{2} \choose i-1}\B(H_{3,2i}\cup pK_1)
+2\sum_{i=0}^{\frac{n-5}{2}}{\frac{n-5}{2} \choose i}\B(H_{3,2i+1}\cup pK_1)
+\sum_{i=0}^{\frac{n-5}{2}}{\frac{n-5}{2} \choose i}\B(H_{3,2i}\cup pK_1)\\
=&\sum_{i=1}^{\frac{n-5}{2}}\left({\frac{n-5}{2} \choose i-1}+{\frac{n-5}{2} \choose i}\right)\B(H_{3,2i}\cup pK_1)+\B(H_{3,n-3}\cup pK_1) + \B(H_{3,0}\cup pK_1)\\
&+2\sum_{i=0}^{\frac{n-5}{2}}{\frac{n-5}{2} \choose i}\B(H_{3,2i+1}\cup pK_1)\\
=&\sum_{i=0}^{\frac{n-3}{2}}{\frac{n-3}{2} \choose i}\B(H_{3,2i}\cup pK_1)
+2\sum_{i=0}^{\frac{n-5}{2}}{\frac{n-5}{2} \choose i}\B(H_{3,2i+1}\cup pK_1)\\
=&\B(C_n\cup pK_1)+\sum_{i=1}^{n-4}\alpha_i\B(H_{3,i}\cup pK_1)
\end{align*} 
where 
$$\alpha_i=\begin{cases}
\;\;\displaystyle{\frac{n-3}{2} \choose \frac{i}{2}}-1&\mbox{if }i\mbox{ is even}\\[0.5cm]
\displaystyle 2{\frac{n-5}{2} \choose \frac{i-1}{2}}&\mbox{if }i\mbox{ is odd.}
\end{cases}$$
Similarly, $\T(C_{n-2}\cup (p+2)K_1)=\T(C_n\cup pK_1)+\sum_{i=1}^{n-4}\alpha_i\T(H_{3,i}\cup pK_1)$.
Moreover, we know from Theorems \ref{thm:10} and \ref{thm:12} that 
$$\A(H_{3,n-4}\cup pK_1)<\A(P_{n}\cup pK_1)<\A(C_{n}\cup pK_1).$$ Also, given $i\in \{1,\cdots,n-5\}$, $H_{3,n-4}\cup pK_1$ is obtained from $H_{3,i}\cup pK_1$ by repeatedly adding vertices of degree 1, and it follows from Proposition \ref{prop:6} that 
$$\A(H_{3,i}\cup pK_1)<\A(H_{3,n-4}\cup pK_1)<\A(C_{n}\cup pK_1).$$ 
Since all $\alpha_i$ are strictly positive, we can conclude from Proposition \ref{prop:7} that $\A(C_{n-2}\cup (p+2)K_1)<\A(C_{n}\cup (p+2)K_1)$.\\

\noindent {\it Case 2 : $n$ is even}

\noindent The proof is similar to the previous case. More precisely, Lemma \ref{lem:16} shows that

$\displaystyle\B(C_n\cup pK_1)=\sum_{i=0}^{\frac{n-4}{2}}\B(H_{3,2i+1}\cup pK_1)+\B(P_{2}\cup pK_1)$ and

 $\displaystyle\T(C_n\cup pK_1)=\sum_{i=0}^{\frac{n-4}{2}}\T(H_{3,2i+1}\cup pK_1)+\T(P_{2}\cup pK_1)$
 
 \noindent and lemmas \ref{lem:15} and \ref{lem:16} give
 \allowdisplaybreaks
 \begin{align*}
 &\B(C_{n-2}\cup (p+2)K_1)=B(H_{n-2,2}\cup pK_1)+2\B(H_{n-2,1}\cup pK_1)+\B(H_{n-2,0}\cup pK_1)\\
 =&\left(\sum_{i=0}^{\frac{n-6}{2}}{\frac{n-6}{2} \choose i}\B(H_{3,2i+3}{\cup} pK_1){+}\B(P_{4}{\cup} pK_1)\right){+}2\left(\sum_{i=0}^{\frac{n-6}{2}}{\frac{n-6}{2} \choose i}\B(H_{3,2i+2}{\cup} pK_1){+}\B(P_{3}{\cup} pK_1)\right)\\
 &+\left(\sum_{i=0}^{\frac{n-6}{2}}{\frac{n-6}{2} \choose i}\B(H_{3,2i+1}\cup pK_1)+\B(P_{2}\cup pK_1)\right)\\
 =&\sum_{i=1}^{\frac{n-4}{2}}{\frac{n-6}{2} \choose i-1}\B(H_{3,2i+1}\cup pK_1)+2\sum_{i=0}^{\frac{n-6}{2}}{\frac{n-6}{2} \choose i}\B(H_{3,2i+2}\cup pK_1)+\sum_{i=0}^{\frac{n-6}{2}}{\frac{n-6}{2} \choose i}\B(H_{3,2i+1}\cup pK_1)\\
 &+\B(P_{4}\cup pK_1)+2\B(P_{3}\cup pK_1)+\B(P_{2}\cup pK_1)\\
 =&\sum_{i=1}^{\frac{n-6}{2}}\left({\frac{n-6}{2} \choose i-1}+{\frac{n-6}{2} \choose i}\right)\B(H_{3,2i+1}\cup pK_1)
 +\B(H_{3,n-3}\cup pK_1) + \B(H_{3,1}\cup pK_1)\\
 &+2\sum_{i=0}^{\frac{n-6}{2}}{\frac{n-6}{2} \choose i}\B(H_{3,2i+2}\cup pK_1)+\B(P_{4}\cup pK_1)+2\B(P_{3}\cup pK_1)+\B(P_{2}\cup pK_1)\\
 =&\sum_{i=0}^{\frac{n-4}{2}}{\frac{n-4}{2} \choose i}\B(H_{3,2i+1}\cup pK_1)
 +2\sum_{i=0}^{\frac{n-6}{2}}{\frac{n-6}{2} \choose i}\B(H_{3,2i+2}\cup pK_1)\\
 &+\B(P_{4}\cup pK_1)+2\B(P_{3}\cup pK_1)+\B(P_{2}\cup pK_1)\\
 =&\B(C_n\cup pK_1)+\sum_{i=2}^{n-4}\alpha_i\B(H_{3,i}\cup pK_1)+\B(P_{4}\cup pK_1)+2\B(P_{3}\cup pK_1)
 \end{align*} 
 where 
 $$\alpha_i=\begin{cases}
 \;\;\displaystyle{\frac{n-4}{2} \choose \frac{i-1}{2}}-1&\mbox{if }i\mbox{ is odd}\\[0.5cm]
 \displaystyle 2{\frac{n-6}{2} \choose \frac{i-2}{2}}&\mbox{if }i\mbox{ is even.}
 \end{cases}$$
 Similarly, 
 $$\displaystyle\T(C_{n-2}\cup (p+2)K_1)=\T(C_n\cup pK_1)+\sum_{i=2}^{n-4}\alpha_i\T(H_{3,i}\cup pK_1)+\T(P_{4}\cup pK_1)+2\T(P_{3}\cup pK_1).$$
 As already mentioned, we know that 
 $$\A(H_{3,i}\cup pK_1)<\A(H_{3,n-4}\cup pK_1)<\A(P_{n}\cup pK_1)<\A(C_{n}\cup pK_1)$$ for all $i=2,\cdots,n-4.$
 Also, $P_{n}\cup pK_1$ is obtained from $P_{3}\cup pK_1$ by repeatedly adding vertices of degree 1, and it follows from Proposition \ref{prop:6} that 
 $$\A(P_{3}\cup pK_1)<\A(P_{4}\cup pK_1)<\A(P_{n}\cup pK_1)<\A(C_{n}\cup pK_1).$$  Since all $\alpha_i$ are strictly positive, we conclude from Proposition \ref{prop:7} that $\A(C_{n-2}\cup (p+2)K_1)<\A(C_{n}\cup (p+2)K_1)$.
	\end{proof}
	
\noindent Proposition \ref{prop:5} immediately gives the following Corollary.

\begin{corollary}If $n\geq 5$ and $p\geq 0$ then
	\begin{align*}
	&\left(\sum_{j=1}^{n-3}(-1)^{j+1}\sum_{i = 0}^{p+2} {p+2 \choose i} B_{n+i-j-1}\right)\left(\sum_{j=1}^{n-1}(-1)^{j+1}\sum_{i = 0}^{p} {p \choose i} B_{n+i-j}\right)\\
	<&\left(
	\sum_{j=1}^{n-3}(-1)^{j+1}\sum_{i = 0}^{p+2} {p+2 \choose i} B_{n+i-j-2}\right)\left(\sum_{j=1}^{n-1}(-1)^{j+1}\sum_{i = 0}^{p} {p \choose i} B_{n+i-j+1}\right).
	\end{align*}
\end{corollary}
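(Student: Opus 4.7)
The plan is to derive the corollary algebraically from the preceding Theorem, which states $\A(C_n\cup pK_1) > \A(C_{n-2}\cup(p+2)K_1)$ for every $n\geq 5$ and $p\geq 0$, by clearing denominators and then substituting the closed forms from Proposition~\ref{prop:5}. First I would observe that both $\B(C_n\cup pK_1)$ and $\B(C_{n-2}\cup(p+2)K_1)$ are strictly positive, so the strict inequality on $\A$-values is equivalent to
\[
\T(C_{n-2}\cup(p+2)K_1)\,\B(C_n\cup pK_1) \;<\; \T(C_n\cup pK_1)\,\B(C_{n-2}\cup(p+2)K_1).
\]

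Next, I would invoke Proposition~\ref{prop:5} twice: once with parameters $(n,p)$ to obtain
\[
\B(C_n\cup pK_1)=\sum_{j=1}^{n-1}(-1)^{j+1}\sum_{i=0}^{p}\binom{p}{i}B_{n+i-j},\quad
\T(C_n\cup pK_1)=\sum_{j=1}^{n-1}(-1)^{j+1}\sum_{i=0}^{p}\binom{p}{i}B_{n+i-j+1},
\]
and once with parameters $(n-2,p+2)$. Using the trivial identities $(n-2)+i-j = n+i-j-2$ and $(n-2)+i-j+1 = n+i-j-1$, the second application gives
\[
\B(C_{n-2}\cup(p+2)K_1)=\sum_{j=1}^{n-3}(-1)^{j+1}\sum_{i=0}^{p+2}\binom{p+2}{i}B_{n+i-j-2}
\]
together with the analogous expression for $\T(C_{n-2}\cup(p+2)K_1)$, in which $B_{n+i-j-2}$ is replaced by $B_{n+i-j-1}$. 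These four double sums match, factor by factor, the four expressions appearing in the statement of the corollary.

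Substituting into the cross-multiplied inequality and pairing the factors correctly identifies the left-hand side with $\T(C_{n-2}\cup(p+2)K_1)\,\B(C_n\cup pK_1)$ and the right-hand side with $\T(C_n\cup pK_1)\,\B(C_{n-2}\cup(p+2)K_1)$, reproducing the corollary verbatim. There is no real analytic obstacle, since all the hard work (the inductive manipulation of $H_{n,r}\cup pK_1$, the cascade $\A(H_{3,i}\cup pK_1) < \A(P_n\cup pK_1) < \A(C_n\cup pK_1)$, and the application of Proposition~\ref{prop:7}) was absorbed into the proof of the Theorem; the only care needed here is index bookkeeping, namely checking that the shifts $n\mapsto n-2$ and $p\mapsto p+2$ in Proposition~\ref{prop:5} change the outer summation bound from $n-1$ to $n-3$ and the binomial coefficient from $\binom{p}{i}$ to $\binom{p+2}{i}$, without altering anything else.
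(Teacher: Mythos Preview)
Your proposal is correct and matches the paper's own approach exactly: the paper simply states that ``Proposition~\ref{prop:5} immediately gives the following Corollary,'' and you have spelled out precisely that substitution, cross-multiplying the inequality $\A(C_n\cup pK_1)>\A(C_{n-2}\cup(p+2)K_1)$ and plugging in the closed forms for $\B$ and $\T$ with the appropriate index shifts.
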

\begin{example}

\noindent For $p=0$ and $n\geq 5$, the above Corollary provides the following inequalities for the Bell numbers:
\begin{align*}
&\sum_{j=1}^{n-3}(-1)^{j+1}(B_{n-j-1}+2B_{n-j}+B_{n-j+1})\sum_{j=1}^{n-1}(-1)^{j+1}B_{n-j}\\
<&
\sum_{j=1}^{n-3}(-1)^{j+1} (B_{n-j-2}+2B_{n-j-1}+B_{n-j})\sum_{j=1}^{n-1}(-1)^{j+1} B_{n-j+1}\\
\iff&
\left(\sum_{j=2}^{n-2}(-1)^{j}B_{n-j}+2\sum_{j=1}^{n-3}(-1)^{j+1}B_{n-j}+\sum_{j=0}^{n-4}(-1)^{j}B_{n-j}\right)\sum_{j=1}^{n-1}(-1)^{j+1}B_{n-j}\\
<&\left(\sum_{j=2}^{n-2}(-1)^{j} B_{n-j-1}+2\sum_{j=1}^{n-3}(-1)^{j+1} B_{n-j-1}+\sum_{j=0}^{n-4}(-1)^{j} B_{n-j-1}\right)\sum_{j=1}^{n-1}(-1)^{j+1} B_{n-j+1}\\
\iff&
(B_n+B_{n-1}+7(-1)^n)\sum_{j=1}^{n-1}(-1)^{j+1}B_{n-j}<(B_{n-1}+B_{n-2}+3(-1)^n)\sum_{j=1}^{n-1}(-1)^{j+1} B_{n-j+1}.
\end{align*}

It is easy to check that this inequality is also valid for $n=4$.
\end{example}
\section{Conclusion}

We have shown how the average number of colors in the non-equivalent colorings of a graph $G$ helps to derive inequalities for the Bell numbers. Among the inequalities, we have shown that 
\begin{itemize}
	\item $B_n^2<B_{n-1}B_{n+1}\quad$ for all $n\geq 1$,
	\item $B_{n}(B_{n}+B_{n+1})<B_{n-1}(B_{n+1}+B_{n+2})\quad$ for all $n\geq 1$,
	\item $B_n(B_{n}-B_{n-1})< B_{n+1}(B_{n-1}-B_{n-2})\quad$ for all $n\geq 4$,
	\item $\displaystyle (B_{n-1}-B_{n-2})\sum_{j=1}^{n-1}(-1)^{j+1}B_{n-j+1}
	<
	(B_n-B_{n-1})\sum_{j=1}^{n-1}(-1)^{j+1}B_{n-j}\quad$ for all $n\geq 2$,
	\item $\displaystyle B_n\sum_{j=1}^{n-1}(-1)^{j+1}B_{n-j} < B_{n-1}\sum_{j=1}^{n-1}(-1)^{j+1}B_{n-j+1}\quad$ for all $n\geq 5$,
	\item $\displaystyle(B_n+B_{n-1}+7(-1)^n)\sum_{j=1}^{n-1}(-1)^{j+1}B_{n-j}<(B_{n-1}+B_{n-2}+3(-1)^n)\sum_{j=1}^{n-1}(-1)^{j+1} B_{n-j+1}\quad$ \\
	for all $n\geq 4$.
\end{itemize}
We have no doubt that other inequalities for the Bell numbers can be generated by comparing the average numbers of colors in the non-equivalent colorings of other types of graphs.

\noindent 2010 {\it Mathematics Subject Classification}: Primary
11B73; Secondary 05C15.

\noindent \emph{Keywords: } Bell number, Stirling number, graph coloring.

\bigskip
\hrule
\bigskip

\noindent (Concerned with sequences \seqnum{A141390},
\seqnum{A005493}.)

\end{document}